\newcommand{\R}{\mathbb{R}}
\title{The Shadows of a Cycle Cannot All Be Paths}
\author{Prosenjit Bose\thanks{School of Computer Science, Carleton University, Ottawa ON, Canada, \protect\url{jit@scs.carleton.ca}, \protect\url{jdecaruf@cg.scs.carleton.ca}.}
\and Jean-Lou De Carufel\footnotemark[1]
\and Michael G.~Dobbins
  \thanks{Department of Mathematical Sciences, Binghamton University,
      Binghamton, NY, USA.
    \protect\url{michaelgenedobbins@gmail.com}}
\and Heuna Kim
 \thanks{Department of Mathematics, Arbeitsgruppe Theoretische Informatik, Freie Universit\"at Berlin, Berlin Germany. 
 \protect\url{heunak@mi.fu-berlin.de}}
\and Giovanni Viglietta\thanks{School of Electrical Engineering and Computer Science, University of Ottawa, Ottawa ON, Canada, \protect\url{viglietta@gmail.com}.}}
\begin{document}
\thispagestyle{empty}
\maketitle

\begin{abstract}
A \emph{shadow} of a subset $S$ of Euclidean space is an orthogonal 
projection of $S$ into one of the coordinate hyperplanes. In this paper we show that it is not possible for all three shadows of a cycle (i.e., a simple closed curve) in $\R^3$ to be paths (i.e., simple open curves).

We also show two contrasting results: the three shadows of a path in $\R^3$ can all be cycles (although not all convex) and, for every $d\geq 1$, there exists a $d$-sphere embedded in $\R^{d+2}$ whose $d+2$ shadows have no holes (i.e., they deformation-retract onto a point).
\end{abstract}

\section{Introduction}\label{s:intro}
\emph{Oskar's maze}, named after the Dutch puzzle designer Oskar van Deventer, who invented it in 1983, is a mechanical puzzle consisting of a hollow cube and three mutually orthogonal rods joined at their centers (see Figure~\ref{fig1}). Each face of the cube has slits forming a maze, and the mazes on opposite faces are identical. Each rod is orthogonal to a pair of opposite faces, and it is able to slide in the slits, tracing out the maze. Hence, in order to move the rods around, one has to solve three mazes simultaneously.

\begin{figure}[h]
\centering
\includegraphics[width=0.65\linewidth]{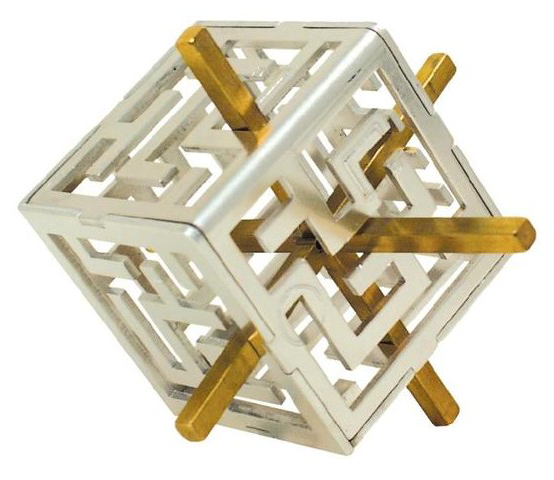}
\caption{Oskar's maze, produced by Bits and Pieces.}
\label{fig1}
\end{figure}

In 1994, Hendrik W.\ Lenstra asked if the mazes could be chosen so that the common point of the three rods could trace a simple closed curve.  Observe that none of the mazes may contain any cycles, or some pieces of the cube would fall out of the puzzle. So, what Lenstra was really asking for is a simple closed curve whose projections onto three pairwise orthogonal planes contain no cycles.  In other words, he wanted the three shadows of a simple closed 3D curve to all be trees.

As Peter Winkler reported in his book \emph{Mathematical mind-benders}~\cite{winkler}, a solution had already been found some years before by John R.\ Rickard, who discovered the curve illustrated in Figure~\ref{fig2}, also appearing on the front cover of Winkler's book.

Several other, more complex solutions to Lenstra's problem are known. Notably, in 2012 Adam P.\ Goucher constructed a simple closed curve having shadows that are all trees, which also happens to be a trefoil knot~\cite{blog}. The curve was therefore named \emph{Treefoil}. Goucher also constructed a pair of linked cycles whose union has shadows that are all trees.

\begin{figure}[h]
\centering
\includegraphics[width=0.8\linewidth]{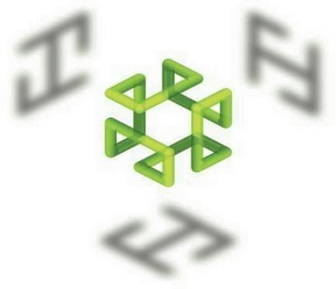}
\caption{Rickard's curve, illustrated by Afra Zomorodian, and appearing on the front cover of Peter Winkler's book \emph{Mathematical mind-benders}.}
\label{fig2}
\end{figure}

Our research is motivated by the following two questions. Is it possible for the three shadows of a simple closed curve to be paths, i.e., have neither cycles nor branch points? Can the three shadows of a simple open curve be simple closed curves? Both these questions are related to Lenstra's question, whose history is outlined in~\cite{winkler}. These questions have also been posed independently (see~\cite{cccg2007,workshop}).

\paragraph{Our contribution.}
In Section~\ref{s:paths} we answer the first question in the negative: the three shadows of a simple closed curve in $\R^3$ cannot all be paths. 

In Section~\ref{s:open} we answer the second question in the affirmative: there exist simple open curves in $\R^3$ whose three shadows are simple closed curves, although the shadows cannot all be convex. Furthermore, we exhibit a polygonal chain with this property having only six vertices, and we prove that six is the minimum.

In Section~\ref{s:higher} we extend Rickard's curve to higher dimensions, giving an inductive construction of a $d$-sphere embedded in $\R^{d+2}$, for every $d\geq 1$, whose $d+2$ shadows are all contractible. (A contractible set is one that can be continuously shrunk to a point, and hence it has no holes.)

Section~\ref{s:conclusions} concludes the paper with some remarks and suggestions for further work.

This research has obvious applications in computer vision and 3D object reconstruction, where the goal is to deduce properties of an unknown 3-dimensional object given its three projections. Specifically, we may want to study the topology of an object that projects to three given paths. It is easy to see that such an object may not be unique, and hence it makes sense to study the set of 3-dimensional objects that are \emph{compatible} with three given projections. Observe that any such set is closed under taking unions, and therefore it has a unique ``largest'' object, which is the union of all the objects in the set.

It is interesting to note that there are triplets of paths that are not compatible with any connected set, such as the one in Figure~\ref{fig6}. This means that an Oskar's-maze-like puzzle could be ``unsolvable'' even if it had no crossroads on any face. By ``unsolvable'' we mean that the set of locations that are reachable by the central point of the three rods depends on where the rods are located. Therefore, if we assign two points in the 3-dimensional maze determined by the three 2-dimensional mazes, it may be impossible to go from one to the other by moving the rods around.

\begin{figure}[h]
\centering
\includegraphics[scale=0.5]{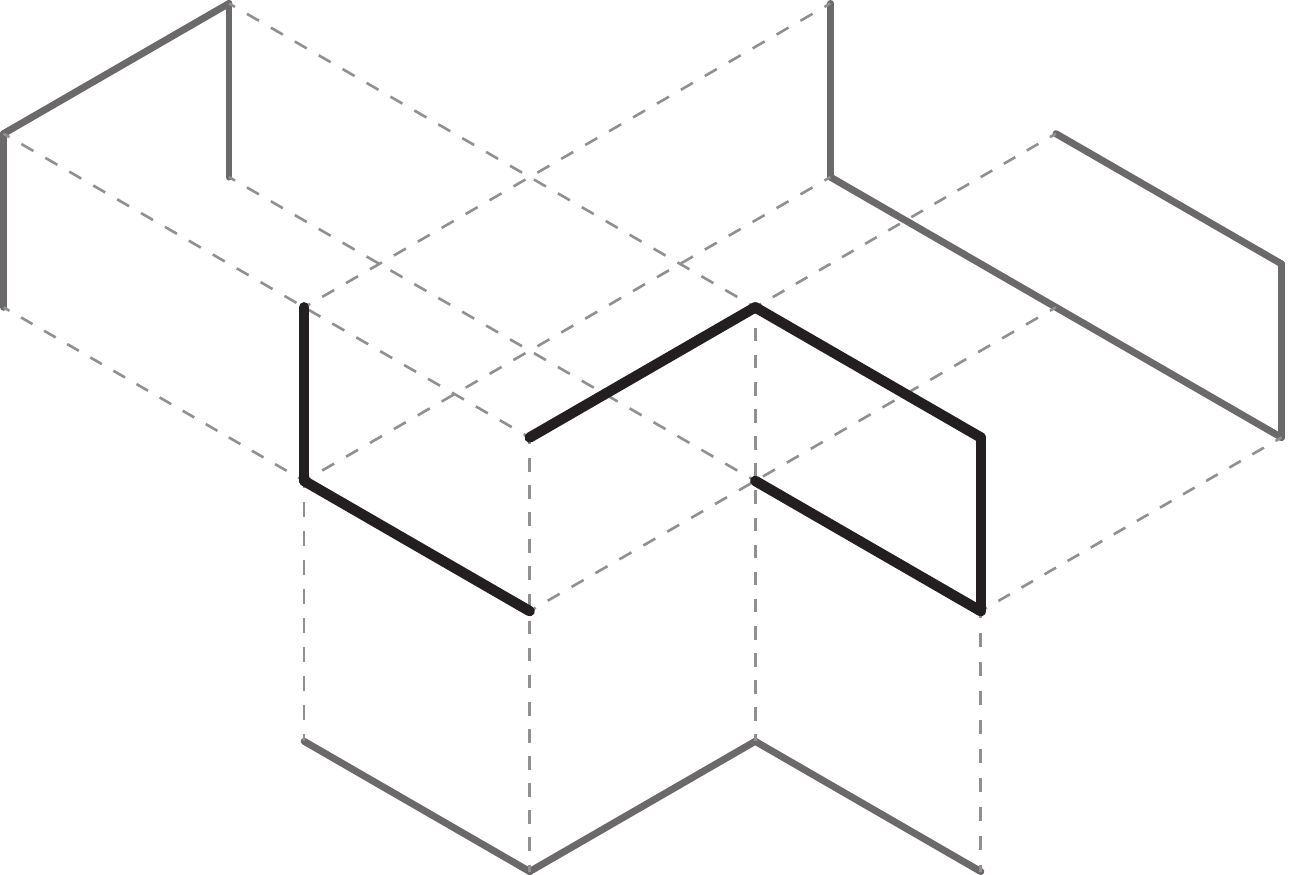}
\caption{Connected shadows whose unique compatible set is disconnected.}
\label{fig6}
\end{figure}

\section{The shadows of a cycle}\label{s:paths}
In this section we prove that the shadows of a simple closed curve in $\R^3$ cannot all be simple open curves. 
We start with some notation and definitions. 

For a point $p\in \R^n$ and $1\leq i\leq n$, we denote by $p_i$ the $i$-th coordinate of $p$. The \emph{$x_i$-projection}, or \emph{$x_i$-shadow}, of a set $A\subseteq\R^n$, denoted by $\pi_i(A)$, is the orthogonal projection of $A$ into the $i$-th coordinate hyperplane, e.g., $\pi_1(A) = \{(p_2,p_3,\cdots,p_n) \mid  p \in A \}$. If $A=\{p\}$, we may simply write $\pi_i(p)$ instead of $\pi_i(\{p\})$.

A \emph{path} is a (non-degenerate) simple open curve, and the \emph{interior} $\gamma^\circ$ of a path $\gamma$ is a copy of the path with its endpoints removed. A \emph{cycle} is a (non-degenerate) simple closed curve.

An \emph{$x_i$-strand} of a simple curve is a minimal path between the $x_i$-extremes of the curve.  
That is, an $x_i$-strand of a simple curve $\gamma$ with $x_i$-minimum $a_i = \min_{x \in \gamma} x_i$ and $x_i$-maximum $b_i = \max_{x \in \gamma} x_i$ is a path $\sigma \subseteq \gamma$ whose endpoints $s$ and $t$ are such that $s_i=a_i$ and $t_i=b_i$, and every internal point $x\in\sigma^\circ$ is such that $x_i \neq a_i,b_i$.

\begin{obs}\label{obs2}
The interiors of any two distinct $x_i$-strands of a simple curve are disjoint. Hence any two distinct $x_i$-strands of a path intersect at most at one common endpoint.
\end{obs}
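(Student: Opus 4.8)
The plan is to pass to a parametrization of the simple curve $\gamma$ and reduce the statement to an elementary fact about overlapping sub-arcs whose endpoints are pinned to the extreme $x_i$-values. Write $a_i=\min_{x\in\gamma}x_i$ and $b_i=\max_{x\in\gamma}x_i$; if $a_i=b_i$ there are no $x_i$-strands and nothing to prove, so assume $a_i<b_i$. Since a simple curve is a continuous injective image of a compact space inside a Hausdorff space, we may fix a homeomorphism $\varphi$ from $[0,1]$ (if $\gamma$ is a path) or from the circle $\R/\mathbb{Z}$ (if $\gamma$ is a cycle) onto $\gamma$. A sub-path of $\gamma$ is a compact connected subset homeomorphic to $[0,1]$, hence the $\varphi$-image of a closed sub-arc, which in the cycle case lifts to an interval $[\alpha,\beta]\subseteq\R$ of length $<1$ (a sub-path is not all of $\gamma$). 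Under $\varphi$, such a sub-path is an $x_i$-strand exactly when $\varphi(\alpha)_i$ and $\varphi(\beta)_i$ are $a_i$ and $b_i$ in some order while every point of $\varphi((\alpha,\beta))$ has $i$-th coordinate strictly between $a_i$ and $b_i$; in particular its interior is $\varphi((\alpha,\beta))$.

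Now suppose $\sigma$ and $\tau$ are $x_i$-strands with $\sigma^\circ\cap\tau^\circ\neq\emptyset$; I claim $\sigma=\tau$. Pick $p\in\sigma^\circ\cap\tau^\circ$ and choose real parameters so that $\sigma=\varphi([\alpha,\beta])$, $\tau=\varphi([\alpha',\beta'])$, $p=\varphi(\theta)$, with $\alpha<\theta<\beta$ and $\alpha'<\theta<\beta'$; in the cycle case this is arranged by translating the two lifted arcs by integers so that a common lift $\theta$ of $p$ lies in the open interior of both, which is possible since each arc has length $<1$. The crux is then a one-line contradiction: if $\alpha<\alpha'$ then $\alpha'\in(\alpha,\beta)$, so $\varphi(\alpha')$ is an interior point of the strand $\sigma$ and $\varphi(\alpha')_i$ lies strictly between $a_i$ and $b_i$; but $\varphi(\alpha')$ is an endpoint of the strand $\tau$, so $\varphi(\alpha')_i\in\{a_i,b_i\}$. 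The inequality $\alpha'<\alpha$ is impossible for the same reason with the roles of $\sigma$ and $\tau$ exchanged, so $\alpha=\alpha'$; the identical argument applied to the right endpoints gives $\beta=\beta'$, hence $\sigma=\tau$. This is the first assertion.

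For the second assertion, let $\gamma$ be a path and let $\sigma=\varphi([\alpha,\beta])$, $\tau=\varphi([\alpha',\beta'])$ be distinct $x_i$-strands, so both intervals are non-degenerate and, by the first part, $(\alpha,\beta)\cap(\alpha',\beta')=\emptyset$. Two disjoint non-degenerate open intervals of $\R$ are linearly ordered, so after possibly swapping $\sigma$ and $\tau$ we have $\beta\leq\alpha'$; then $[\alpha,\beta]\cap[\alpha',\beta']$ is either empty or the single point $\beta=\alpha'$, and in the latter case $\varphi(\beta)=\varphi(\alpha')$ is at once an endpoint of $\sigma$ and an endpoint of $\tau$. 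Hence $\sigma$ and $\tau$ meet in at most one point, and when they do it is a common endpoint.

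The only delicate step --- more a point to be careful about than a genuine obstacle --- is the cycle case: two arcs of a circle can a priori overlap in two separate pieces, one around each end, so before invoking the endpoint-coordinate argument one must pin down integer lifts of the two arcs that actually share an interior point, after which they behave exactly like real intervals. (One could instead argue intrinsically that $\sigma^\circ$ is a whole connected component of the open set $\{x\in\gamma : a_i<x_i<b_i\}$, because $\sigma$'s two endpoints lie outside this set, so distinct strands lie in distinct components and are disjoint; but that route must separately handle components abutting an endpoint of $\gamma$, which the parametrized argument avoids.)
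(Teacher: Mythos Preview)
The paper states this as an \emph{observation} and gives no proof at all, so there is nothing to compare against: your argument is correct and supplies exactly the kind of justification the paper leaves implicit. The parametrized argument is clean, and your handling of the cycle case (aligning the lifts so that a single real parameter $\theta$ lies in the open interior of both arcs) is the right way to avoid the ``overlap in two pieces'' pitfall.

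One small remark on your parenthetical alternative: the connected-component route is actually cleaner than you give it credit for. If $\sigma$ is an $x_i$-strand, then $\sigma^\circ$ is an open connected subset of $U=\{x\in\gamma : a_i<x_i<b_i\}$, and its two boundary points in $\gamma$ (the endpoints of $\sigma$) lie outside $U$ by the definition of a strand. Hence $\sigma^\circ$ is both open and closed in $U$, so it equals an entire connected component of $U$. Distinct strands therefore occupy distinct components and have disjoint interiors. The worry about ``components abutting an endpoint of $\gamma$'' does not arise, because the endpoints of a strand always realise an extreme $x_i$-value and so never lie in $U$, regardless of whether they happen to coincide with endpoints of $\gamma$.
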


\begin{obs}\label{obs1}
If $\sigma$ is an $x_i$-strand of a simple curve $\gamma$, then $\pi_j(\sigma)$ is an $x_i$-strand of $\pi_j(\gamma)$, for $j \neq i$.
\end{obs}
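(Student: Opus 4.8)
The plan is to lean on the one structural fact at hand: since $j\neq i$, projecting along $x_j$ does not move the $x_i$-coordinate, i.e.\ the $x_i$-coordinate of $\pi_j(p)$ is $p_i$ for every point $p$. First I would cash this out into bookkeeping. The $x_i$-values attained on $\pi_j(\gamma)$ are precisely those attained on $\gamma$, so $\pi_j(\gamma)$ has the same $x_i$-extremes $a_i$ and $b_i$, and a point of $\pi_j(\gamma)$ attains $x_i=a_i$ (resp.\ $x_i=b_i$) exactly when it is the image of a point of $\gamma$ attaining that value; in particular $\pi_j(s)\neq\pi_j(t)$ (as $a_i<b_i$), and these two points attain the $x_i$-minimum and $x_i$-maximum of $\pi_j(\gamma)$ respectively. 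Moreover, $\sigma$ is compact and connected (it is a path), so $\pi_j(\sigma)$ is a compact connected subset of the simple curve $\pi_j(\gamma)$; in the setting where the statement is used $\pi_j(\gamma)$ is a path, and a compact connected subset of a simple arc is a closed sub-arc, which here is non-degenerate since it contains $\pi_j(s)$ and $\pi_j(t)$. Thus $\pi_j(\sigma)$ is itself a path contained in $\pi_j(\gamma)$ --- possibly all of it, which is what happens, for instance, when $\gamma$ is a cycle.

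The substantive point is that projecting might a priori enlarge $\pi_j(\sigma)$ into a sub-arc of $\pi_j(\gamma)$ that sticks out past the images of the endpoints of $\sigma$; ruling this out is the crux, and I would do it with a short connectivity argument. No $x\in\sigma$ other than $s$ can satisfy $\pi_j(x)=\pi_j(s)$, since then $x_i=a_i$, which is impossible for $x\in\sigma^\circ$ (its interior avoids the value $a_i$) and for $x=t$ (there $x_i=b_i\neq a_i$). Hence $\pi_j(\sigma\setminus\{s\})=\pi_j(\sigma)\setminus\{\pi_j(s)\}$, and the left side is connected, being the image of a path with one endpoint deleted; since deleting an \emph{interior} point of a path disconnects it, $\pi_j(s)$ must be an endpoint of $\pi_j(\sigma)$, not an interior point. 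Running the same argument at $t$ shows $\pi_j(t)$ is the other endpoint. So the two endpoints of the path $\pi_j(\sigma)$ are $\pi_j(s)$ and $\pi_j(t)$, which by the first paragraph attain the $x_i$-extremes of $\pi_j(\gamma)$; and any interior point $y$ of $\pi_j(\sigma)$ equals $\pi_j(x)$ for some $x\in\sigma$ with $x\notin\{s,t\}$ (otherwise $y=\pi_j(s)$ or $y=\pi_j(t)$), so $x\in\sigma^\circ$ and $y$ has $x_i$-coordinate $x_i\neq a_i,b_i$. Together these are exactly the requirements for $\pi_j(\sigma)$ to be an $x_i$-strand of $\pi_j(\gamma)$.

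I expect the connectivity step to be the only part requiring thought; everything else is just unwinding definitions. The two things to stay alert to are the tacit hypothesis that $\pi_j(\gamma)$ is a simple curve (otherwise ``$x_i$-strand of $\pi_j(\gamma)$'' is not even defined), which does hold whenever the observation is invoked, and the harmless convention $a_i<b_i$ under which the notion of $x_i$-strand is non-vacuous.
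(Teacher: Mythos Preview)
The paper states this observation without proof, so there is no argument to compare against; the authors evidently regard it as immediate from the fact that $\pi_j$ preserves the $x_i$-coordinate when $j\neq i$. Your write-up correctly fleshes out what they leave implicit, and in particular you are right to flag the two tacit hypotheses (that $\pi_j(\gamma)$ is itself a simple curve, and that $a_i<b_i$), both of which hold wherever the observation is invoked.

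The connectivity step is the one place where something could have gone wrong, and you handle it cleanly: showing that $s$ is the \emph{only} preimage in $\sigma$ of $\pi_j(s)$ (because any other preimage would have $x_i$-coordinate $a_i$, forbidden in $\sigma^\circ$ and at $t$) is exactly what is needed to conclude that $\pi_j(s)$ is an endpoint of the sub-arc $\pi_j(\sigma)$ rather than an interior point. One small remark: if you ever need the observation when $\pi_j(\gamma)$ is a cycle rather than a path, the same idea still works---remove both $s$ and $t$ at once and note that $\pi_j(\sigma^\circ)$ is connected while a cycle minus two distinct points is not, ruling out $\pi_j(\sigma)=\pi_j(\gamma)$; then $\pi_j(\sigma)$ is a proper sub-arc and your endpoint argument applies verbatim.
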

If $\pi_j(\gamma)$ is a path, the converse of Observation~\ref{obs1} is also true, as stated in the next lemma.

\begin{lemma}\label{lem_exstrand}
If $\sigma$ is an $x_i$-strand of the $x_j$-projection of a simple curve $\gamma$, with $i\neq j$, and $\pi_j(\gamma)$ is a path, then there exists an $x_i$-strand of $\gamma$ whose $x_j$-projection is $\sigma$.
\end{lemma}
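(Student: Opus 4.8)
The plan is to ``lift'' $\sigma$ through the projection $\pi_j$. Write $P := \pi_j(\gamma)$ (a path, by hypothesis) and let $a := \min_{x\in\gamma}x_i$ and $b := \max_{x\in\gamma}x_i$; since $i\neq j$ these are also the $x_i$-extremes of $P$, and we may assume $a<b$. Let $s,t$ be the endpoints of $\sigma$, with $s_i=a$ and $t_i=b$, so that by the strand property every interior point of $\sigma$ has $x_i$-coordinate in $(a,b)$. Deleting the open sub-path $\sigma^\circ$ disconnects $P$ as $P=P_s\sqcup\sigma^\circ\sqcup P_t$, where $P_s\ni s$ and $P_t\ni t$ are closed in $P$ (one of them may be just $\{s\}$ or $\{t\}$), with $P_s\cap\sigma=\{s\}$ and $P_t\cap\sigma=\{t\}$. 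Set $\Gamma:=\gamma\cap\pi_j^{-1}(\sigma)$, a closed subset of $\gamma$, and put $A:=\gamma\cap\pi_j^{-1}(s)$ and $B:=\gamma\cap\pi_j^{-1}(t)$; these are closed, nonempty, disjoint, and every point of $A$ (resp.\ $B$) is an $x_i$-minimal (resp.\ maximal) point of $\gamma$.

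\emph{Core claim: some connected component $K$ of $\Gamma$ meets both $A$ and $B$.} Granting this, we extract the required strand. Since $\gamma$ is a simple curve and $\Gamma$ is closed, $K$ is a closed sub-arc of $\gamma$ (or all of $\gamma$); fix $p_0\in K\cap A$ and $p_1\in K\cap B$ and consider the sub-arc of $K$ from $p_0$ to $p_1$. Let $t^\ast$ be the first point of $B$ on it, and then let $s^\ast$ be the last point of $A$ occurring strictly before $t^\ast$ (both exist since $A$, $B$ are closed and disjoint). The sub-arc $\sigma'\subseteq K$ from $s^\ast$ to $t^\ast$ projects into $\sigma$, and its interior avoids $A\cup B$ and hence projects into $\sigma^\circ$. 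So the endpoints of $\sigma'$ have $x_i$-coordinates $a$ and $b$, while every interior point has $x_i$-coordinate in $(a,b)$ by the strand property of $\sigma$; thus $\sigma'$ is an $x_i$-strand of $\gamma$. Finally $\pi_j(\sigma')$ is a connected sub-path of $\sigma$ containing both of its endpoints, so $\pi_j(\sigma')=\sigma$, as needed.

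\emph{Proof of the claim, by contradiction.} Suppose no component of $\Gamma$ meets both $A$ and $B$. First, no component $K$ of $\Gamma$ is disjoint from $A\cup B$: otherwise $\pi_j(K)\subseteq\sigma^\circ$, and $K$ could not be a limit of points of $\gamma\setminus\Gamma$ (whose projections lie in $P\setminus\sigma$, whose closure meets $\sigma$ only in $\{s,t\}$), so $K$ would contain a connected neighbourhood of each of its points and hence be open as well as closed in $\gamma$ --- impossible, since $A\cup B\neq\emptyset$. Thus every component of $\Gamma$ meets exactly one of $A$, $B$; call these the \emph{$s$-components} and \emph{$t$-components}. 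Now set
\[
  \Gamma_s := \bigl(\gamma\cap\pi_j^{-1}(P_s)\bigr)\cup\!\!\bigcup_{K\ s\text{-component}}\!\!K,
  \qquad
  \Gamma_t := \bigl(\gamma\cap\pi_j^{-1}(P_t)\bigr)\cup\!\!\bigcup_{K\ t\text{-component}}\!\!K .
\]
Using $P=P_s\sqcup\sigma^\circ\sqcup P_t$ one checks that $\Gamma_s$ and $\Gamma_t$ are nonempty, disjoint, and cover $\gamma$. They are also closed: if $q_n\to q$ with $q_n\in\Gamma_s$ and $\pi_j(q)\notin P_s$, then the $q_n$ eventually lie in $s$-components $K_n$; if infinitely many of the $K_n$ are distinct then, being pairwise disjoint connected subsets of the compact simple curve $\gamma$, their diameters tend to $0$, so $K_n$ collapses onto $q$ and the points of $K_n\cap A$ converge to $q$, forcing $\pi_j(q)=s$ and hence $q\in A\subseteq\Gamma_s$. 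So $\gamma=\Gamma_s\sqcup\Gamma_t$ is a separation, contradicting connectedness of $\gamma$. This proves the claim, and with it the lemma.

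The step I expect to be the main obstacle is the closedness of $\Gamma_s$ and $\Gamma_t$: a priori $\Gamma$ can have infinitely many components accumulating, and an infinite union of closed arcs need not be closed. This is precisely where compactness of $\gamma$ is used (to force all but finitely many components of $\Gamma$ to be small), together with the hypothesis that $\pi_j(\gamma)$ is a path, which yields the clean two-piece decomposition of $P$ on either side of $\sigma$ and hence the clean partition above. The remaining ingredients are routine point-set topology and an intermediate-value argument for the projection.
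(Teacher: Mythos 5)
Your proof is correct, but it takes a genuinely different route from the paper's. You work with the full preimage $\Gamma=\gamma\cap\pi_j^{-1}(\sigma)$ and prove, by a connectedness/separation argument, that some component of $\Gamma$ must join the fiber $A$ over one endpoint of $\sigma$ to the fiber $B$ over the other, and then extract a minimal crossing arc from that component; the delicate point is exactly the one you flag, namely closedness of your sets $\Gamma_s,\Gamma_t$, which you handle via the fact that pairwise disjoint connected subsets of a compact arc or cycle can include only finitely many of diameter at least any fixed $\varepsilon$ (via uniform continuity of a parameterization). The paper instead argues along the curve itself: it picks a sub-path $\gamma'\subseteq\gamma$ whose endpoints project to the endpoints of the path $\pi_j(\gamma)$ (so that $\pi_j(\gamma')=\pi_j(\gamma)$), parameterizes $\gamma'$, takes the \emph{last} point $c'$ projecting to the endpoint $c$ of $\sigma$ on the near side, and then the \emph{first} subsequent point $d'$ projecting to the other endpoint $d$, which must exist because $d$ separates $c$ from the far endpoint in the path; the sub-path from $c'$ to $d'$ is the desired strand. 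The paper's argument is shorter because it exploits the linear order along a path and a one-line separation fact, whereas yours avoids choosing an ordering and instead proves a slightly stronger structural statement (a component of the preimage of $\sigma$ meets both extreme fibers) at the cost of heavier point-set topology. Two small points you gloss over are easily filled but should be stated: the case in your closedness argument where only finitely many of the components $K_n$ are distinct (then one closed component contains a subsequence, so the limit lies in it), and a proof or citation for the diameter fact about disjoint connected subsets of a compact simple curve, which fails for general compact sets and really does use that $\gamma$ is an arc or a cycle.
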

\begin{proof}
Let $a$ and $b$ be the endpoints of $\pi_j(\gamma)$, let $a',b'\in\gamma$ such that $\pi_j(a')=a$ and $\pi_j(b')=b$, and let $\gamma'\subseteq\gamma$ be a path with endpoints $a'$ and $b'$. Since $\pi_j(\gamma)$ is a path, $\pi_j(\gamma)=\pi_j(\gamma')$. Let $c$ and $d$ be the endpoints of $\sigma$, such that $a$ and $c$ belong to the same connected component of $\pi_j(\gamma')\setminus \sigma^\circ$. Parameterizing $\gamma'$ from $a'$ to $b'$, let $c'$ be the last point of $\gamma'$ such that $\pi_j(c')=c$. Because $d$ separates $c$ and $b$ in $\pi_j(\gamma')$, there are points of $\gamma'$ after $c'$ whose $x_j$-projection is $d$. Letting $d'$ be the first of such points, the sub-path of $\gamma'$ with endpoints $c'$ and $d'$ is an $x_i$-strand of $\gamma$ whose $x_j$-projection is $\sigma$.
\end{proof}

In the following lemma we show that, if two shadows of a non-degenerate cycle are paths, then each of the two shadows has at least two similarly-oriented strands.

\begin{lemma}\label{lem_2strands}
If $\gamma$ is a cycle in $\R^3$ that is not contained in any $x_1$-orthogonal plane, and $\pi_2(\gamma)$ and $\pi_3(\gamma)$ are paths, then $\pi_3(\gamma)$ has at least two distinct $x_1$-strands. 
\end{lemma}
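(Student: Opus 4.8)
The plan is to argue by contradiction: assume $\pi_3(\gamma)$ has a unique $x_1$-strand $\sigma$, and derive a contradiction from the fact that $\pi_2(\gamma)$ is also a path. Set $a_1=\min_\gamma x_1<b_1=\max_\gamma x_1$, and let $c,d$ be the endpoints of $\sigma$, with $c_1=a_1$ and $d_1=b_1$.

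First I would determine the structure of $P:=\pi_3(\gamma)$ under this assumption. Since $P$ is a path with only one $x_1$-strand, examining how $P$ meets $\{x_1=a_1\}$ and $\{x_1=b_1\}$ yields a decomposition $P=T_c\cup\sigma\cup T_d$, where $T_c$ is a (possibly trivial) subpath from $c$ to an endpoint of $P$ that avoids $\{x_1=b_1\}$, $T_d$ is a subpath from $d$ to the other endpoint of $P$ that avoids $\{x_1=a_1\}$, and $T_c\cap\sigma=\{c\}$, $T_d\cap\sigma=\{d\}$, $T_c\cap T_d=\emptyset$; in particular $\sigma$ meets $\{x_1=a_1\}$ only in $c$ and $\{x_1=b_1\}$ only in $d$. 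Next, since $\gamma$ is a cycle with $x_1$ nonconstant, splitting $\gamma$ at an $x_1$-minimum and an $x_1$-maximum into two arcs and extracting an $x_1$-strand from each shows that $\gamma$ has at least two distinct $x_1$-strands; by Observation~\ref{obs1} each projects under $\pi_3$ to an $x_1$-strand of $P$, hence to $\sigma$. So fix distinct $x_1$-strands $\tilde\sigma_1\neq\tilde\sigma_2$ of $\gamma$ with $\pi_3(\tilde\sigma_1)=\pi_3(\tilde\sigma_2)=\sigma$; from the structure of $\sigma$ just noted, the endpoints of each $\tilde\sigma_i$ project to $c$ and to $d$. By Observation~\ref{obs2} the interiors of $\tilde\sigma_1,\tilde\sigma_2$ are disjoint, and by Observation~\ref{obs1} the sets $\pi_2(\tilde\sigma_1),\pi_2(\tilde\sigma_2)$ are $x_1$-strands of $Q:=\pi_2(\gamma)$.

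It remains to contradict that $Q$ is a path, which is the heart of the argument. Since $\pi_3(\tilde\sigma_1)=\pi_3(\tilde\sigma_2)=\sigma$, both arcs lie in the slab $\pi_3^{-1}(\sigma)$; parameterizing the simple arc $\sigma$ by $[0,1]$, this slab is homeomorphic to $[0,1]\times\R$, and $\pi_2$ restricts on it to the ``horizontal fold'' $(t,s)\mapsto(\sigma_1(t),s)$, where $\sigma_1$ is the $x_1$-coordinate along $\sigma$ (so $\sigma_1(0)=a_1$, $\sigma_1(1)=b_1$, and $\sigma_1$ attains these values only at the endpoints). If $\pi_2(\tilde\sigma_1)=\pi_2(\tilde\sigma_2)$, then $\tilde\sigma_1$ and $\tilde\sigma_2$ have the same $\pi_2$- and $\pi_3$-images and hence the same endpoints (a point of $\R^3$ is determined by its $\pi_2$- and $\pi_3$-images), so $\gamma=\tilde\sigma_1\cup\tilde\sigma_2$ is a Jordan curve lying in the slab and meeting each edge of the slab in a single point, and one must rule out that the fold sends this Jordan curve to a simple arc. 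If instead $\pi_2(\tilde\sigma_1)\neq\pi_2(\tilde\sigma_2)$, then by Observation~\ref{obs2} these two $x_1$-strands of $Q$ meet in at most one endpoint, so at least one of the two endpoint-pairs is distinct; using in addition the complementary arcs of $\gamma$ joining $\tilde\sigma_1$ to $\tilde\sigma_2$ over $c$ and over $d$ (whose $\pi_3$-images are forced to be $T_c$ together with part of $\sigma$, respectively $T_d$ together with part of $\sigma$), one shows $Q$ must contain a cycle. The main obstacle, in both subcases, is the general situation where $\sigma$ is not $x_1$-monotone and the tails $T_c,T_d$ are nontrivial: then the fold $(t,s)\mapsto(\sigma_1(t),s)$ genuinely collapses the slab and no homeomorphism argument is available. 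When $\sigma$ is $x_1$-monotone the fold is a homeomorphism onto its image and both subcases close at once, since the folded image of $\gamma$ (respectively of $\tilde\sigma_1\cup\tilde\sigma_2$ together with the bridging arcs) is manifestly not a simple arc; for the general case I would reduce to the combinatorics of how the $x_1$-strands and the blocks $\gamma\cap\{x_1=a_1\}$, $\gamma\cap\{x_1=b_1\}$ of $\gamma$ project onto those of $P$, and use the minimality of strands together with Lemma~\ref{lem_exstrand} to exclude every configuration compatible with $P$ having a single $x_1$-strand.
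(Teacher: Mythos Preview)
Your setup is sound up to the point where you have two distinct $x_1$-strands $\tilde\sigma_1,\tilde\sigma_2$ of $\gamma$ with $\pi_3(\tilde\sigma_1)=\pi_3(\tilde\sigma_2)=\sigma$. After that, however, the argument is incomplete: you split into the subcases $\pi_2(\tilde\sigma_1)=\pi_2(\tilde\sigma_2)$ and $\pi_2(\tilde\sigma_1)\neq\pi_2(\tilde\sigma_2)$, and in both you explicitly leave the non-monotone situation open (``one must rule out\dots'', ``I would reduce to the combinatorics\dots''). The second subcase in particular is a genuine obstacle with your choice of $\tilde\sigma_1,\tilde\sigma_2$: nothing prevents two arbitrary $x_1$-strands of $\gamma$ from having different $\pi_2$-images, and the sketched argument involving the tails $T_c,T_d$ and ``bridging arcs'' does not obviously force a cycle in $Q$.

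The missing idea, which the paper exploits, is that you can \emph{choose} the second strand so that the awkward subcase never occurs. Start with one $x_1$-strand $\tilde\sigma_1$ of $\gamma$ and look at its image $\tau_2:=\pi_2(\tilde\sigma_1)$, an $x_1$-strand of the path $Q=\pi_2(\gamma)$. Since $\gamma$ is a cycle, $\gamma\setminus\tilde\sigma_1^{\circ}$ is connected and still $\pi_2$-surjects onto $Q$; Lemma~\ref{lem_exstrand} (applied to $\pi_2$, not $\pi_3$) then produces an $x_1$-strand $\tilde\sigma_2\subset\gamma\setminus\tilde\sigma_1^{\circ}$ with $\pi_2(\tilde\sigma_2)=\tau_2$. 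Now both projections of $\tilde\sigma_1$ and $\tilde\sigma_2$ agree, and a single ``first crossing'' argument finishes: for any $c_1\in(a_1,b_1)$, the first point on each of $\tilde\sigma_1,\tilde\sigma_2,\tau_2,\sigma$ with $x_1=c_1$ is well defined, and the first points on $\tilde\sigma_1$ and $\tilde\sigma_2$ have the same $\pi_2$- and $\pi_3$-images, hence coincide, contradicting Observation~\ref{obs2}. This sidesteps entirely the fold/slab analysis and the non-monotone difficulty you flagged.
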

\begin{proof}
Since $\gamma$ is not in an $x_1$-orthogonal plane, $\gamma$ and $\pi_3(\gamma)$ both have at least one $x_1$-strand. 
Let $\sigma$ be an $x_1$-strand of $\gamma$, let $\tau_2=\pi_2(\sigma)$, and assume for contradiction that $\pi_3(\gamma)$ has a unique $x_1$-strand $\tau_3$, as sketched in Figure~\ref{figl1}.

\begin{figure}[h]
\centering
\includegraphics[scale=1.5]{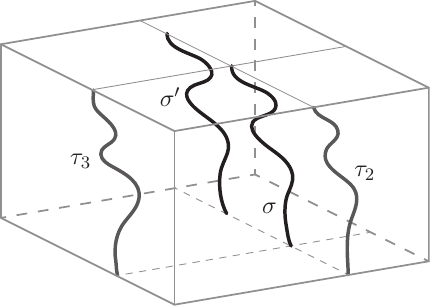}
\caption{Some $x_1$-strands of the curve in Lemma~\ref{lem_2strands}.}
\label{figl1}
\end{figure}

Since $\tau_2$ is an $x_1$-strand of $\pi_2(\gamma)$ by Observation~\ref{obs1}, and the endpoints of a path cannot be in the interior of one of its strands,
$\pi_2(\gamma) \setminus \tau_2^\circ$ contains the endpoints of $\pi_2(\gamma)$. Also, the $x_2$-shadow of $\gamma \setminus \sigma^\circ$ is a superset of $\pi_2(\gamma) \setminus \tau_2^\circ$, and hence it contains the endpoints of $\pi_2(\gamma)$, as well. Moreover, $\gamma \setminus \sigma^\circ$ is connected (because $\gamma$ is a cycle), hence $\pi_2(\gamma \setminus \sigma^\circ)$ is a connected subset of the path $\pi_2(\gamma)$ containing its endpoints, and therefore it must be all of $\pi_2(\gamma)$. By Lemma~\ref{lem_exstrand}, $\gamma$ has an $x_1$-strand $\sigma' \in \gamma \setminus \sigma^\circ$ such that $\pi_2(\sigma') = \pi_2(\sigma) = \tau_2$.
Since $\pi_3(\gamma)$ has a unique $x_1$-strand $\tau_3$, we have $\pi_3(\sigma') = \pi_3(\sigma) = \tau_3$, again by Observation~\ref{obs1}. 

Respectively parameterize $\sigma$, $\sigma'$, $\tau_2$, and $\tau_3$ each from the $x_1$-minimum $a_1$ to the $x_1$-maximum $b_1$ of $\gamma$.  Choose some value $c_1$ strictly between these extremes, $a_1 < c_1 < b_1$.  
Let $s \in \sigma$, $s' \in \sigma'$, $t_2 \in \tau_2$, $t_3 \in \tau_3$ respectively be the first point of each strand where the $x_1$ coordinate attains the value $c_1$. 
With this we have $\pi_2(s) = \pi_2(s') = t_2$ and $\pi_3(s) = \pi_3(s') = t_3$, which implies $s=s'$. So the interiors of the strands $\sigma$ and $\sigma'$ intersect, contradicting Observation~\ref{obs2}. Thus our assumption must be wrong: $\pi_3(\gamma)$ must have at least two distinct $x_1$-strands. 
\end{proof}

Next we prove that an $x_1$-strand and an $x_2$-strand of a planar path must intersect each other, and therefore their union must be a sub-path.

\begin{lemma}\label{lem_strandunion}
If $\sigma_1$ and $\sigma_2$ are respectively an $x_1$-strand and an $x_2$-strand of a path $\gamma$ in $\R^2$,
then $\sigma_1 \cup \sigma_2$ is a path.
\end{lemma}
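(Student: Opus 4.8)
Let me set up coordinates and parametrize the path $\gamma$ in $\R^2$. Write $a_1=\min_{x\in\gamma}x_1$, $b_1=\max_{x\in\gamma}x_1$ for the $x_1$-extremes, and similarly $a_2,b_2$ for the $x_2$-extremes. The $x_1$-strand $\sigma_1$ runs between a point of height $x_1=a_1$ and a point of height $x_1=b_1$, with all interior points strictly between; symmetrically for $\sigma_2$ with the roles of the coordinates swapped. The goal is to show $\sigma_1\cap\sigma_2\neq\emptyset$, because once the two strands meet, their union is a connected subset of the simple arc $\gamma$, hence itself a simple sub-arc, i.e.\ a path (it cannot be all of $\gamma$-minus-something-disconnected, and a connected sub-arc of a path is a path).

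The approach is an intermediate-value / connectedness argument along $\gamma$. Parametrize $\gamma:[0,1]\to\R^2$ injectively. Then $\sigma_1=\gamma([u_1,v_1])$ and $\sigma_2=\gamma([u_2,v_2])$ for two subintervals, and I want to show these subintervals overlap. Suppose for contradiction they are disjoint, say $v_1<u_2$ (the other order is symmetric). Consider the "middle" portion $\gamma([v_1,u_2])$ and the two leftover end portions $\gamma([0,v_1])\supseteq\sigma_1$ and $\gamma([u_2,1])\supseteq\sigma_2$. I will derive a contradiction by tracking where the coordinate extremes $a_1,b_1,a_2,b_2$ are attained and using that $\gamma$ is an arc, not a cycle, so removing an interior sub-arc disconnects it into exactly two pieces, and each coordinate extreme of $\gamma$ must be attained somewhere.

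Here is the cleaner way I would actually run it. Both endpoints of $\sigma_1$ lie on the horizontal lines $x_1=a_1$ and $x_1=b_1$; both endpoints of $\sigma_2$ lie on the vertical lines $x_2=a_2$ and $x_2=b_2$. Now look at the four sub-arcs into which $\sigma_1$ and $\sigma_2$ cut $\gamma$ (if they were disjoint). The sub-arc of $\gamma$ "between" $\sigma_1$ and $\sigma_2$, call it $\mu=\gamma([v_1,u_2])$, has one endpoint on $\{x_1=a_1\}\cup\{x_1=b_1\}$ (an endpoint of $\sigma_1$) and one endpoint on $\{x_2=a_2\}\cup\{x_2=b_2\}$ (an endpoint of $\sigma_2$). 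Meanwhile the far end-piece beyond $\sigma_1$, namely $\gamma([0,u_1])$, must still be able to "escape" to reach whichever of $a_2,b_2$ is not on $\mu$'s side — but it is blocked, because $\sigma_1$ already separates $\gamma$ and the only way past it is through $\sigma_1$ itself, whose interior avoids $x_1\in\{a_1,b_1\}$. I expect the crispest formulation: pick the value $c_1$ of $x_1$ attained by an endpoint of $\sigma_2$ — wait, $\sigma_2$'s endpoints have extreme $x_2$, not extreme $x_1$, so instead I consider the point $p$ of $\gamma$ attaining $x_2=a_2$ and the point $q$ attaining $x_2=b_2$. These are the endpoints of $\sigma_2$ (or at least $\sigma_2$'s endpoints lie on those two lines). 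Since $\gamma$ is an arc and $\sigma_1$ is a sub-arc not containing $p$ or $q$ in its interior, $p$ and $q$ lie in the closure of $\gamma\setminus\sigma_1^\circ$, which has (at most) two components; similarly $\sigma_1$'s endpoints $r$ (with $x_1=a_1$) and $s$ (with $x_1=b_1$) lie in $\gamma\setminus\sigma_2^\circ$. Routing all four of $p,q,r,s$ through an arc while keeping $\sigma_1^\circ$ off the lines $x_1=a_1,b_1$ and $\sigma_2^\circ$ off $x_2=a_2,b_2$ forces an order on $[0,1]$ that is impossible — essentially because $\sigma_1$ must contain the global $x_1$-min and max and $\sigma_2$ the global $x_2$-min and max, so each is "unavoidable," yet two disjoint unavoidable sub-arcs of a single arc would have to be nested or crossing, and an arc has no crossings.

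**Main obstacle.** The delicate point is turning "unavoidable sub-arc" into a rigorous statement: the honest version is that $\sigma_1$ is a sub-arc of $\gamma$ whose complement $\gamma\setminus\sigma_1$ is contained in the closed region $\{a_1\le x_1\le b_1\}$ but, more to the point, every point of $\gamma$ with $x_1=a_1$ or $x_1=b_1$ — in particular the global extremes — forces any sub-arc joining the two components of $\gamma\setminus\sigma_1$ to pass through $\sigma_1$. I would make this precise by the following lemma-within-the-proof: if $\alpha$ is any sub-arc of the arc $\gamma$ and $\sigma_1$ is an $x_1$-strand disjoint from $\alpha$, then $\alpha$ lies entirely on one "side" of $\sigma_1$, meaning in a single component of $\gamma\setminus\sigma_1^\circ$; applying this with $\alpha=\sigma_2$ gives $\sigma_2$ on one side of $\sigma_1$, and then one of the two $x_2$-extreme endpoints of $\sigma_2$ is on the "wrong" side relative to where $\gamma$ attains that extreme, unless $\sigma_1$ meets $\sigma_2$. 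Once the intersection is established, the conclusion "$\sigma_1\cup\sigma_2$ is a path" is immediate from Observation~\ref{obs2}-style reasoning: a connected union of two sub-arcs of an arc, sharing at least a point, is again a sub-arc, and any non-degenerate sub-arc of a path is a path.
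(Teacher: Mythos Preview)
Your argument has a genuine gap: nowhere do you use that $\gamma$ lies in the plane. Every step you propose---parametrizing $\gamma$ by $[0,1]$, writing $\sigma_1=\gamma([u_1,v_1])$ and $\sigma_2=\gamma([u_2,v_2])$, observing that a sub-arc disjoint from $\sigma_1$ lies in one component of $\gamma\setminus\sigma_1^\circ$, tracking which endpoints attain which coordinate extremes---works verbatim for a path in $\R^3$. But the lemma is false there. Take the polygonal path
\[
(0,\tfrac12,0)\;\to\;(1,\tfrac12,0)\;\to\;(\tfrac12,0,1)\;\to\;(\tfrac12,1,1).
\]
Its $x_1$-range and $x_2$-range are both $[0,1]$; the first segment is an $x_1$-strand, the last segment is an $x_2$-strand, and they are disjoint. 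So no purely order-theoretic argument on the parameter interval can succeed; you must invoke two-dimensional topology somewhere.

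The paper's proof does exactly that. It closes off $\sigma_1$ into a Jordan curve by adjoining a polygonal arc $\tau$ that runs outside the bounding box $B$ of $\gamma$, from the left endpoint of $\sigma_1$ down, across, and up to the right endpoint. The bottom endpoint of $\sigma_2$ (which lies on the bottom edge of $B$) is then in the closed interior of this Jordan curve, while the top endpoint of $\sigma_2$ is in the closed exterior; hence $\sigma_2$ must cross $\sigma_1\cup\tau$, and since $\sigma_2\subset B$ while $\tau^\circ$ avoids $B$, the crossing is on $\sigma_1$. Your final paragraph (``$\sigma_1\cup\sigma_2$ connected in a path implies it is a path'') is fine and matches the paper, but the heart of the lemma is the Jordan-curve separation, which your proposal omits.
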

\begin{proof}
Let $B=[a_1,b_1] \times [a_2,b_2]$ be the bounding box of $\gamma$, and let $s_1$ and $t_1$ be the leftmost and rightmost points of $\sigma_1$, respectively. Consider the polygonal chain $\tau$ with vertices $s_1$, $(a_1-1,a_2-1)$, $(b_1+1,a_2-1)$, $t_1$, in this order. Then $\sigma_1\cup\tau$ is a cycle which, by the Jordan Curve Theorem, disconnects the plane into two components: an interior $I$ and an exterior $E$.

Let $s_2$ and $t_2$ be the lowest and highest points of $\sigma_2$, respectively. Note that $s_2$ lies on the bottom edge of $B$, and hence it lies either in $I$ or on the curve $\sigma_1\cup\tau$. Similarly, $t_2$ lies on the top edge of $B$, and hence it lies either in $E$ or on the curve $\sigma_1\cup\tau$. Thus, $s_2\notin E$ and $t_2\notin I$. It follows that $\sigma_2$ must intersect $\R^2\setminus (I\cup E)=\sigma_1\cup \tau$. Since $\sigma_2\subset B$ and $\tau^\circ\cap B=\varnothing$, $\sigma_2$ must intersect $\sigma_1$.

Thus, $\sigma_1 \cup \sigma_2$ is a connected subset of the path $\gamma$, and is therefore a path.
\end{proof} 

In our final lemma we show that a planar path cannot have two distinct $x_1$-strands and two distinct $x_2$-strands.

\begin{lemma}\label{lem_path}
A path in $\R^2$ has either a unique $x_1$-strand or a unique $x_2$-strand. 
\end{lemma}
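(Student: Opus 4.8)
The plan is to argue by contradiction: first squeeze the configuration into a rigid form, then invoke the Jordan Curve Theorem.

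\emph{Reduction to a common corner.} Suppose $\gamma$, viewed as a homeomorphic image of $[0,1]$, had two distinct $x_1$-strands $\sigma_1,\sigma_1'$ and two distinct $x_2$-strands $\sigma_2,\sigma_2'$. Each strand is the image of a unique closed subinterval of $[0,1]$; by Observation~\ref{obs2} the subintervals of $\sigma_1$ and $\sigma_1'$ have disjoint interiors, and likewise those of $\sigma_2$ and $\sigma_2'$, so we may order them so that $\sigma_1$ precedes $\sigma_1'$ and $\sigma_2$ precedes $\sigma_2'$ along $\gamma$. By Lemma~\ref{lem_strandunion} each of $\sigma_1\cup\sigma_2$, $\sigma_1\cup\sigma_2'$, $\sigma_1'\cup\sigma_2$ and $\sigma_1'\cup\sigma_2'$ is a path, hence connected; since $\gamma$ is a homeomorphic image of an interval, this forces the corresponding pairs of subintervals to meet. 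Writing these four ``meeting'' relations as inequalities between the endpoints of the subintervals and combining them with the two disjointness relations, the inequalities chain together and collapse, forcing all four subintervals to share a single common endpoint. Thus there is a point $e\in\gamma$ that is a common endpoint of all four strands; being an endpoint of an $x_1$-strand and of an $x_2$-strand, $e$ realises both an $x_1$-extreme and an $x_2$-extreme of $\gamma$, so it is a corner of the bounding box $B$ of $\gamma$.

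\emph{Normalisation.} After reflecting in the coordinate axes we may assume $e$ is the origin and $B=[0,b_1]\times[0,b_2]$ with $b_1,b_2>0$. Then $\sigma_1,\sigma_1'$ each run from $e$ to a point of the edge $x_1=b_1$ and $\sigma_2,\sigma_2'$ each run from $e$ to a point of the edge $x_2=b_2$, while $\gamma$ enters and leaves $e$ staying, apart from $e$ itself, in the open quadrant $(0,b_1)\times(0,b_2)$. The way the four subintervals nest around $e$ — equivalently, the order in which the far endpoints of the strands are met along $\gamma$ — reduces, after possibly interchanging the two coordinates and/or reversing the parameterisation, to two cases. It is convenient first to restrict $\gamma$ to the smallest sub-arc still containing all four strands (which one checks is again a path with the same four strands); in each case this sub-arc is the union of two of the strands, so we may assume $\gamma$ itself is that union, forcing its endpoints to lie on the relevant box edge(s).

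\emph{Jordan-curve finish.} In each case I would complete a suitable pair of strands meeting at $e$ (their union being an arc) to a Jordan curve $C$ by appending a polygonal arc that runs outside $B$ except at its two endpoints; since $\gamma\subseteq B$, this polygonal arc meets $\gamma$, and hence the two strands, only at those endpoints, so $C$ is simple. By the Jordan Curve Theorem $C$ bounds an interior and an exterior region. The portions of $C$ coming from the two chosen strands avoid one of the edges $x_1=b_1$, $x_2=b_2$, and so does the appended arc, so that whole edge — together with the far endpoints of the two remaining strands lying on it — is contained in the exterior region; each remaining sub-arc of $\gamma$ meets $C$ only at its single endpoint on $C$, hence by connectedness lies in the exterior as well. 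Analysing how $\gamma$ winds through the corner $e$ then identifies the interior region of $C$ as the ``wedge'' that the two chosen strands cut out at $e$, and the position constraints on the remaining strands turn out to be incompatible with all of $\gamma$ avoiding this wedge, yielding a self-intersection — a contradiction. The crux, and the only genuinely delicate part, is exactly this last step: carrying out the Jordan-curve bookkeeping in each of the two cases and checking rigorously that the ``trapped'' sub-arcs cannot remain in the exterior.
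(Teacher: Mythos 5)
Your first stage is sound and actually a little slicker than the paper's: by comparing the parameter subintervals of the four strands and using Lemma~\ref{lem_strandunion} to force each $x_1$-strand to meet each $x_2$-strand, the endpoint inequalities do collapse to a single common endpoint $e$ of all four strands, which must be a corner of the bounding box (the paper reaches the same configuration via a minimal connecting sub-path and Observation~\ref{obs2}). The two nesting cases you identify are also exactly the paper's cases ($\tau_1\subseteq\sigma_1$ with either $\tau_2\subseteq\sigma_2$ or $\sigma_2\subseteq\tau_2$). The problem is that the proof stops where the real work begins. The entire contradiction in each case is asserted, not derived: you write that the position constraints ``turn out to be incompatible'' with $\gamma$ avoiding the wedge, and you yourself flag this as ``the only genuinely delicate part'' that still needs to be carried out. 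That step is precisely the content of the lemma. For comparison, the paper makes it concrete: in the case $\tau_2\subseteq\sigma_2$ it shows $\tau_2^\circ$ lies in the same connected component of $B\setminus\tau_1$ as the edge $\{b_1\}\times[a_2,b_2)$, so the arc $\sigma_2\setminus\tau_2$, whose far endpoint lies on the opposite edge of $B$, must cross $\tau_1$, contradicting $\sigma_2\cap\tau_1=p$ (and symmetrically in the other case). Your sketch never states which point is forced into the interior region of your Jordan curve $C$ or why; indeed as written it is internally inconsistent, since you first argue that the remaining sub-arcs lie in the exterior of $C$ and then claim $\gamma$ cannot avoid the interior wedge, with no argument supplying the clash.

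Two of the auxiliary claims you lean on are also not justified and are false in general. First, $\gamma$ need not stay in the open quadrant $(0,b_1)\times(0,b_2)$ away from $e$: points of $\gamma$ outside the inner strands (e.g., on $\sigma_1\setminus\sigma_2$) may well lie on the lines $x_1=a_1$ or $x_2=a_2$, since the strand definition only constrains the coordinate whose extremes the strand joins. Second, in the nested case the two $x_1$-strands do \emph{not} avoid the edge $x_2=b_2$ of $B$ (they contain the top endpoints of the nested $x_2$-strands), so the claim that ``the portions of $C$ coming from the two chosen strands avoid one of the edges'' needs you to choose the inner pair of strands and to handle degenerate corner coincidences; as stated the separation bookkeeping does not go through. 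In short: correct reduction, correct case split, but the decisive separation argument --- the counterpart of the paper's two short component arguments in $B\setminus\tau_1$ and $B\setminus\sigma_1$ --- is missing.
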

\begin{proof}
Assume for a contradiction that $\gamma$ is a path in $\R^2$ with distinct $x_1$-strands $\sigma_1, \sigma_2$ and distinct $x_2$-strands $\tau_1, \tau_2$. By Observation~\ref{obs2}, $\sigma_1$ and $\sigma_2$ are either disjoint, or their intersection is precisely a common endpoint. Suppose for a contradiction that they are disjoint, and let $\sigma'\subset \gamma$ be the minimal path connecting them. By Lemma~\ref{lem_strandunion}, $\sigma_1 \cup \tau_1$ is a path, as well as $\sigma_2 \cup \tau_1$, which implies that $\sigma'\subseteq \tau_1$. Similarly, $\sigma'\subseteq \tau_2$, and therefore $\sigma'\subseteq\tau_1\cap\tau_2$, contradicting Observation~\ref{obs2}. Thus $\sigma_1\cap\sigma_2$ is a single point $p$, and by a symmetric argument $\tau_1\cap\tau_2=p$, as well. Let $B = [a_1,b_1]\times [a_2,b_2]$ be the bounding box of $\gamma$. Then $p$ must be a vertex of $B$, and we may assume that $p=(b_1,b_2)$. Also, by symmetry, we may assume that $\tau_1 \subseteq \sigma_1$. It follows that $\sigma_1\cap\tau_2=\sigma_2\cap\tau_1=p$, and either $\tau_2 \subseteq \sigma_2$ or $\sigma_2 \subseteq \tau_2$.

\begin{figure}[h]
\centering
\subfigure[]{\label{figl2}\includegraphics[scale=0.525]{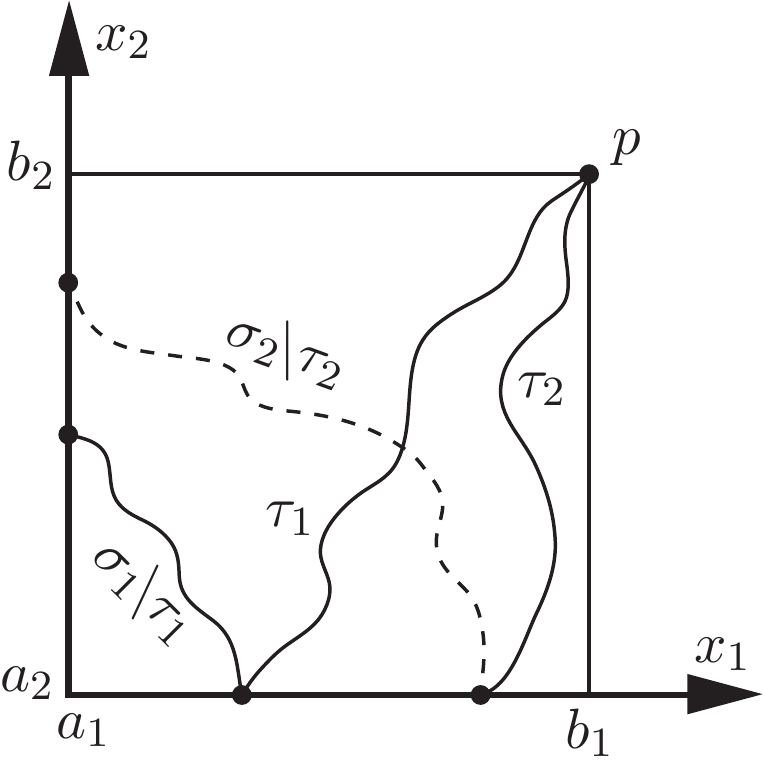}}\ \ \ 
\subfigure[]{\label{figl3}\includegraphics[scale=0.525]{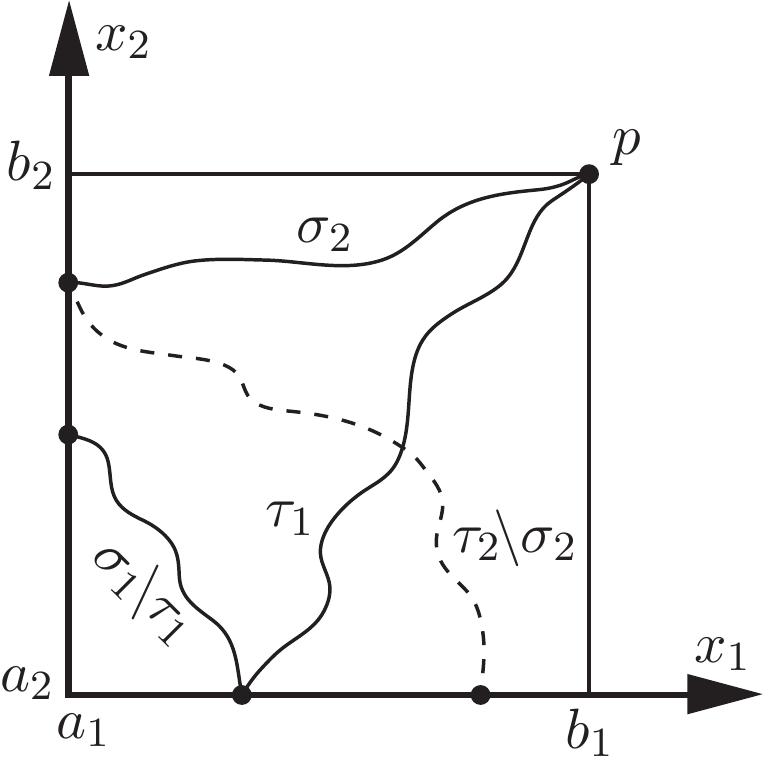}}
\caption{Cases of Lemma~\ref{lem_path}.}
\end{figure}

Suppose that $\tau_2 \subseteq \sigma_2$, as in Figure~\ref{figl2}.
Then $\tau_2^\circ$ is in the same connected component of $B \setminus \tau_1$ as the edge $\{b_1\} \times [a_2,b_2)$.
This implies that $\sigma_2 \setminus \tau_2$ intersects $\tau_1$, contradicting the fact that $\sigma_2\cap\tau_1=p$.

Suppose that $\sigma_2 \subseteq \tau_2$, as in Figure~\ref{figl3}.
Then $\sigma_2^\circ$ is in the same connected component of $B \setminus \sigma_1$ as the edge $[a_1,b_1)\times\{b_2\}$.
This implies that $\tau_2 \setminus \sigma_2$ intersects $\sigma_1$, contradicting the fact that $\sigma_1\cap\tau_2=p$. 

Thus our assumption fails: $\gamma$ has either a unique $x_1$-strand or a unique $x_2$-strand.
\end{proof}

We are now able to prove the main result of this section.

\begin{theorem}\label{thrm_shadowpath}
There is no cycle in $\R^3$ whose shadows are all paths.
\end{theorem}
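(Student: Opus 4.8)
The plan is to argue by contradiction: assume $\gamma\subseteq\R^3$ is a cycle all three of whose shadows $\pi_1(\gamma)$, $\pi_2(\gamma)$, $\pi_3(\gamma)$ are paths, and derive a violation of Lemma~\ref{lem_path}.

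First I would dispose of the degenerate situation. If $\gamma$ were contained in an $x_i$-orthogonal plane for some $i$, then $\pi_i$ would be injective on $\gamma$, so $\pi_i(\gamma)$ would be a cycle rather than a path, contrary to assumption. Hence $\gamma$ lies in no coordinate-orthogonal plane, which is precisely the non-degeneracy hypothesis needed to apply Lemma~\ref{lem_2strands} with respect to any of the three axes.

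Next I would concentrate on a single shadow, say $\pi_3(\gamma)$, which is a path in the $x_1x_2$-plane. Applying Lemma~\ref{lem_2strands} with $i=1$ (using that $\gamma$ is not in an $x_1$-orthogonal plane and that $\pi_2(\gamma)$ and $\pi_3(\gamma)$ are paths) shows that $\pi_3(\gamma)$ has at least two distinct $x_1$-strands. Applying the same lemma again with the roles of the first two coordinates exchanged, i.e.\ with $i=2$ (using that $\gamma$ is not in an $x_2$-orthogonal plane and that $\pi_1(\gamma)$ and $\pi_3(\gamma)$ are paths), shows that $\pi_3(\gamma)$ also has at least two distinct $x_2$-strands. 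But $\pi_3(\gamma)$ is a path in $\R^2$, so Lemma~\ref{lem_path} forces it to have a unique $x_1$-strand or a unique $x_2$-strand, a contradiction. This completes the proof.

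I do not expect any real obstacle: all of the geometric substance has already been isolated in Lemmas~\ref{lem_2strands}, \ref{lem_strandunion}, and \ref{lem_path}, and the theorem is obtained simply by combining them. The only point deserving a moment's care is verifying that the non-degeneracy hypothesis of Lemma~\ref{lem_2strands} holds simultaneously for two different axes, which is exactly what the opening observation secures: were $\gamma$ contained in any coordinate-orthogonal plane, one of its shadows would already fail to be a path.
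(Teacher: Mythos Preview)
Your proposal is correct and follows essentially the same argument as the paper: rule out the degenerate case by noting that a cycle lying in an $x_i$-orthogonal plane would project under $\pi_i$ to a cycle, then apply Lemma~\ref{lem_2strands} twice (once for $x_1$, once with the roles of $x_1$ and $x_2$ swapped) to force $\pi_3(\gamma)$ to have two $x_1$-strands and two $x_2$-strands, contradicting Lemma~\ref{lem_path}. The paper's proof is structured identically.
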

\begin{proof}
Assume for a contradiction that the three shadows of a cycle $\gamma$ in $\R^3$ are all paths. 
Note that $\gamma$ cannot lie in any $x_i$-orthogonal plane, or $\pi_i(\gamma)$ would not be a path.
By Lemma \ref{lem_2strands}, since the $x_2$-shadow and the $x_3$-shadow are both paths, the $x_3$-shadow must have at least two distinct $x_1$-strands. 
Likewise, since the $x_1$-shadow and the $x_3$-shadow are both paths, the $x_3$-shadow must also have at least two distinct $x_2$-strands.  But by Lemma~\ref{lem_path}, a path in the $(x_1,x_2)$-plane cannot have two distinct $x_1$-strands and two distinct $x_2$-strands, which is a contradiction.
\end{proof}

\section{The shadows of a path}\label{s:open}
Here we study the simple open curves in $\R^3$ whose shadows are simple closed curves. In contrast with the similarly-defined curves of the previous section, in this case we can construct a wealth of such curves. An example is illustrated in Figure~\ref{fig7}.

\begin{figure}[h]
\centering
\includegraphics[scale=1]{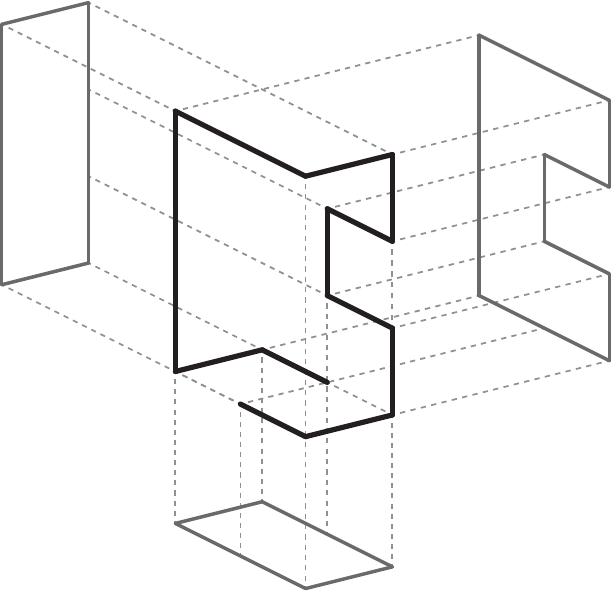}
\caption{Axis-aligned polygonal path whose shadows are all cycles.}
\label{fig7}
\end{figure}

Note that the curve in Figure~\ref{fig7} is a \emph{polygonal path} (i.e., a simple open polygonal chain) consisting of axis-parallel segments. If we allow arbitrarily oriented segments, we can find an example with only six vertices, which is the minimum possible.

\begin{theorem}
There exists a polygonal path in $\R^3$ with six vertices whose shadows are cycles. No such polygonal path exists with fewer than six vertices.
\end{theorem}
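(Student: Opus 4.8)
The plan has two parts: exhibit a six-vertex path that works, and prove that five vertices never suffice.

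\emph{The construction.} I would give an explicit polygonal path $P=v_1v_2\cdots v_6$ in $\R^3$ by listing coordinates and then checking directly that each shadow $\pi_i(P)$ is a simple closed polygonal curve (a picture in the spirit of Figure~\ref{fig7} makes this transparent). To produce such coordinates, one prescribes, for each $i$, the combinatorial type of the intended shadow --- a quadrilateral or a triangle, traced by the five projected segments $\pi_i(v_jv_{j+1})$ in a specified cyclic order, with one or two of them degenerating --- solves the resulting system of coincidence conditions on the coordinates, and then perturbs within the solution set to make $P$ simple in $\R^3$. The one genuine subtlety is that the three ``closing-up'' conditions cannot all have the simplest form $\pi_i(v_1)=\pi_i(v_6)$: imposing that for two different indices $i$ already forces $v_1=v_6$. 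So at least one shadow must close through a vertex coincidence $\pi_i(v_1)=\pi_i(v_k)$ with $k<6$, or through an interior incidence, and the coordinates must be chosen so these mesh together; this is what makes the six-vertex example less than obvious, though once the numbers are down it is a finite verification.

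\emph{The lower bound, easy part.} Suppose $P=v_1\cdots v_n$ is a simple polygonal path in $\R^3$ whose three shadows are all cycles. Each $\pi_i(P)$ is a simple closed polygonal curve and hence has at least three edges; since every maximal edge of $\pi_i(P)$ is covered by the projection of some segment of $P$, we get $n-1\ge 3$, so $n\ge 4$. If $n=4$, then $\pi_i(P)$ has exactly three edges, which must be the projected segments $\pi_i(v_1v_2),\pi_i(v_2v_3),\pi_i(v_3v_4)$ closing up around a triangle; checking how three consecutive segments can do this (every alternative forces two of them to coincide) yields $\pi_i(v_1)=\pi_i(v_4)$. But $\pi_i(v_a)=\pi_i(v_b)$ for two distinct $i$ forces $v_a=v_b$, so requiring $\pi_i(v_1)=\pi_i(v_4)$ for all three $i$ gives $v_1=v_4$, contradicting that $P$ has distinct endpoints. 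Hence $n\ge 5$.

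\emph{The lower bound, heart of the matter.} It remains to rule out $n=5$, where $P$ has four segments and each $\pi_i(P)$ is a triangle or quadrilateral. My plan is a structural case analysis: for each $i$, enumerate the ways the four-segment open arc $\pi_i(v_1)\cdots\pi_i(v_5)$ can have a simple closed curve as its image. Each pattern records which pairs $\pi_i(v_a),\pi_i(v_b)$ coincide and which projected segments degenerate; in particular the loose ends $\pi_i(v_1)$ and $\pi_i(v_5)$ must each fail to be an endpoint of the image, so each must coincide with some $\pi_i(v_j)$ or lie in the relative interior of a projected segment, while no point of the image may have degree other than two. Combining the forced coincidences over the three views, the ``two views force equality'' observation again kills most combinations, and the survivors are driven either to collapse two of $v_1,\dots,v_5$ or to force a sub-path of $P$ to revisit one of its own points (for instance, three consecutive vertices on a common axis-parallel line with a genuine bend at the middle one) --- in either case contradicting simplicity, giving $n\ge 6$. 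The main obstacle is exactly this last step: the number of a-priori patterns for a four-segment arc sweeping a triangle or a quadrilateral is sizeable, and one must verify that each coincidence forced in a planar view is compatible with the no-branch-point condition there before feeding it into the cross-view argument, so the real work is organizing the enumeration so that the uniform observations dispose of the cases in bulk rather than one at a time.
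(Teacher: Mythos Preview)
Your outline for the construction and for $n\le 4$ matches the paper's proof. The paper simply writes down six explicit vertices, $(1,0,1)\,(0,0,0)\,(1,1,0)\,(0,3,0)\,(2,0,2)\,(1,0,0)$, and leaves the verification to a figure; its $n=4$ argument is exactly your ``$\pi_i(v_1)=\pi_i(v_4)$ for all $i$ forces $v_1=v_4$''.

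For $n=5$ the paper bypasses your enumeration with a short geometric argument. The single observation that does the work is: whether $\pi_i(P)$ is a triangle or a quadrilateral, the projections $\pi_i(v_1v_2)$ and $\pi_i(v_4v_5)$ of the first and last segments must meet. Now $v_1\neq v_5$, so for at least two indices, say $i=1,2$, one has $\pi_i(v_1)\neq\pi_i(v_5)$; for those $i$ the shadow cannot be a quadrilateral (that pattern forces $\pi_i(v_1)=\pi_i(v_5)$), so it is a triangle and the two end-segment projections are in fact collinear, lying along one side. Collinearity in both the $x_1$- and $x_2$-shadows means $v_1v_2$ and $v_4v_5$ lie in a plane orthogonal to the $(x_2,x_3)$-plane and also in one orthogonal to the $(x_1,x_3)$-plane, hence they are either collinear in $\R^3$ or lie in a common $x_3$-orthogonal plane. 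Either alternative makes $\pi_j(v_1v_2)$ and $\pi_j(v_4v_5)$ disjoint for some $j$, contradicting the key observation. So instead of cataloguing all the ways a four-edge arc can sweep a triangle or quadrilateral, the paper extracts one uniform incidence fact about the two end segments and finishes in a few lines; your enumeration would work too, but this is the shortcut you were hoping to find.
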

\begin{proof}
An example of such a polygonal path is $(1,0,1)\,(0,0,0)\,(1,1,0)\,(0,3,0)\,(2,0,2)\,(1,0,0)$, which is shown in Figure~\ref{fig8}.

\begin{figure}[h]
\centering
\includegraphics[scale=1]{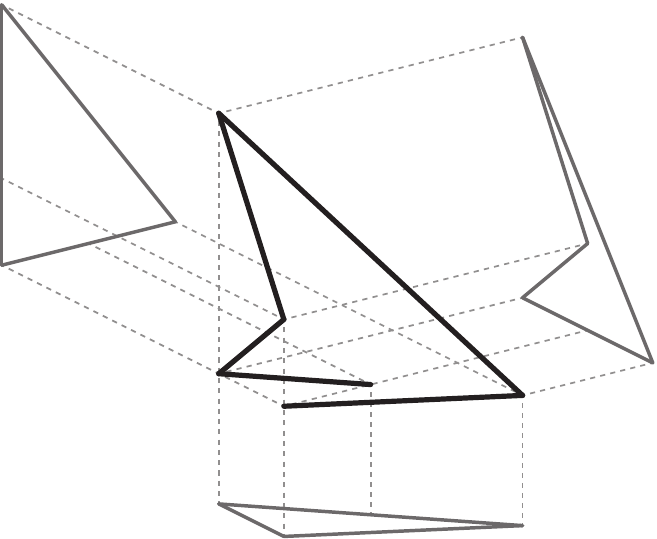}
\caption{Minimal polygonal path whose shadows are all cycles.}
\label{fig8}
\end{figure}

Suppose for a contradiction that a polygonal path in $\R^3$ with $n<6$ vertices exists such that its shadows are cycles. If $n\leq 3$, then clearly no shadow can be a cycle. Suppose that $n=4$, and let the polygonal path be $v_1\,v_2\,v_3\,v_4$. Then each shadow must be a triangle, and hence $\pi_i(v_1)=\pi_i(v_4)$ for every $i\in\{1,2,3\}$. It follows that $v_1=v_4$, which contradicts the fact that a polygonal path is an open curve.

Assume now that $n=5$, and the polygonal path is $v_1\,v_2\,v_3\,v_4\,v_5$. For every $i\in\{1,2,3\}$, the $x_1$-projection of the polygonal path is either a triangle or a quadrilateral. In both cases, the $x_i$-shadows of the segments $v_1\,v_2$ and $v_4\,v_5$ have a non-empty intersection. Since $v_1\neq v_5$, the $x_i$-shadows of $v_1$ and $v_5$ do not coincide for at least two $i$'s, say, $i=1$ and $i=2$. Then the $x_1$-shadow of the polygonal path must be a triangle, $\pi_i(v_1\,v_2)$ and $\pi_i(v_4\,v_5)$ are collinear, and hence the segments $v_1\,v_2$ and $v_4\,v_5$ lie on a plane that is orthogonal to the $(x_2,x_3)$-plane. Similarly, the segments $v_1\,v_2$ and $v_4\,v_5$ lie on a plane that is orthogonal to the $(x_1,x_3)$-plane, too. Hence $v_1\,v_2$ and $v_4\,v_5$ are either collinear or they lie on a common $x_3$-orthogonal plane. If $v_1\,v_2$ and $v_4\,v_5$ are collinear (and disjoint), then their $x_i$-shadows are disjoint for some $i\in\{1,2,3\}$, contradicting the fact that their intersection must be non-empty. If $v_1\,v_2$ and $v_4\,v_5$ lie on a common $x_3$-orthogonal plane, then $\pi_3(v_1\,v_2)$ and $\pi_3(v_4\,v_5)$ are disjoint, which is again a contradiction.
\end{proof}

Note that, in all the above examples, one of the shadows is a non-convex cycle. It is natural to ask whether a path exists whose shadows are all convex cycles. In the following theorem, we answer in the negative. (Due to space constraints, we only give a sketch of the proof.)

\begin{theorem}
There is no path in $\R^3$ whose shadows are convex cycles.
\end{theorem}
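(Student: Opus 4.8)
The plan is to pass to the convex body $K=\mathrm{conv}(\gamma)$ and exploit the fact that its three shadows are exactly the convex bodies $K_i:=\pi_i(K)$, whose boundaries are the given cycles: since $\pi_i(\gamma)$ is a convex cycle, $\pi_i(\gamma)=\partial\big(\mathrm{conv}(\pi_i(\gamma))\big)=\partial\big(\pi_i(\mathrm{conv}\,\gamma)\big)=\partial K_i$. First I would dispose of the planar case: if $\gamma$ lay in a plane $P$, then either $P$ is parallel to some coordinate axis, so that the corresponding shadow lies on a line and cannot be a cycle, or every $\pi_i|_P$ is an affine isomorphism, so that every shadow is homeomorphic to $\gamma$ and is a path, not a cycle. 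Hence $K$ is three-dimensional. An interior point of $K$ would project into the interior of every $K_i$, which is impossible because $\pi_i(\gamma)=\partial K_i$; therefore $\gamma\subseteq\partial K$.

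The core observation I would extract next is a strong rigidity on the points of $\gamma$: for every $p\in\gamma$ and every $i$, the point $\pi_i(p)$ lies on $\partial K_i$, so $K_i$ has a supporting line there, and lifting that line gives a supporting plane of $K$ at $p$ that is parallel to the $x_i$-axis. Thus at \emph{every} point of $\gamma$ the body $K$ simultaneously has a supporting plane parallel to each of the three coordinate axes. Since a single unit normal cannot be orthogonal to all of $e_1,e_2,e_3$, the outer normal cone $N_K(p)$ is never a single ray; hence $\gamma$ meets no smooth point of $\partial K$ and lies in the set of non-smooth boundary points of $K$, a nowhere-dense set of topological dimension at most one (morally, the edge skeleton of $K$). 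At the same time $\gamma$ must still surject onto each $\partial K_i$, and since $\pi_i$ preserves the coordinates $x_j$ ($j\neq i$) this forces the union of the normal cones $\bigcup_{p\in\gamma}N_K(p)$ to contain each of the three coordinate great circles of $S^2$.

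It then remains to derive a contradiction from the situation ``a simple arc lies in the edge skeleton of $K$ and projects onto all three boundary cycles $\partial K_i$'', and this is the step I expect to be the main obstacle. When $K$ is smooth and strictly convex it is immediate: the three $x_i$-silhouettes $\{p\in\partial K:N_K(p)\perp e_i\}$ are then three embedded circles, any two of which meet in exactly the two $x_j$-extreme points of $K$, and those two points violate the third condition, so the locus on which $\gamma$ would have to live is empty. In general the silhouettes overlap in larger sets (already two-dimensional ones for a round cylinder), and one must analyse the set $S:=\{p\in\partial K:N_K(p)\perp e_i\text{ for }i=1,2,3\}$ directly. Here $p\mapsto(\pi_1(p),\pi_2(p),\pi_3(p))$ embeds $S$ into $\partial K_1\times\partial K_2\times\partial K_3$, and $S$ is contained in the one-dimensional non-smooth locus; combining this with the fact that a convex cycle has exactly two points over each interior level, one shows that $S$ is too thin to contain an arc that surjects onto all three factors. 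For a polytope $K$ this becomes a finite combinatorial check — a path in the $1$-skeleton cannot carry edges that project onto the whole boundary of all three shadow polygons at once — and the curved and degenerate configurations (curved edges of $K$, vertical edges of the $K_i$, flat faces of $K$) reduce to the same bookkeeping. Carrying out that last argument uniformly, in a way that covers all of these degeneracies, is where the real work lies.
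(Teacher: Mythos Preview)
Your setup is sound and, up to language, recovers the paper's starting point: the observation that $\gamma$ must lie in the triple silhouette is exactly the statement that $\gamma\subseteq\Gamma_1\cap\Gamma_2\cap\Gamma_3$, where $\Gamma_i=\pi_i^{-1}(\partial K_i)$ is the boundary of the $x_i$-parallel cylinder over $K_i$. From here, however, the two arguments diverge, and your route leaves the decisive step undone.

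The paper does not pass through normal cones or the non-smooth locus of $\partial K$ at all. It analyses the set $\Gamma_1\cap\Gamma_2\cap\Gamma_3$ directly and shows it is (the embedding of) a graph with at most twelve edges: $\Gamma_1\cap\Gamma_2$ consists of two axis-aligned horizontal rectangles joined by four monotone arcs, and cutting this with $\Gamma_3$ leaves at most four arcs on each rectangle plus the four monotone arcs. One may also assume the endpoints of $\gamma$ sit at vertices or edge-midpoints of this graph, so there are only finitely many candidate arcs, and an exhaustive check finishes the proof. The convexity hypothesis is used only to make $\Gamma_1\cap\Gamma_2$ have this simple ``two rectangles plus four arcs'' shape.

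Your proposal, by contrast, tries to argue abstractly that a simple arc in the $1$-dimensional singular locus of $\partial K$ cannot surject onto all three $\partial K_i$. You correctly identify this as ``where the real work lies'' and do not carry it out; the sketch you give (silhouettes of a smooth strictly convex body, then a vague reduction to polytopes) does not cover the actual objects that arise here, which are intersections of three convex cylinders and are neither smooth nor polytopal. The dimension bound on the non-smooth locus, while true, does not by itself prevent an arc from living there and projecting onto three circles. In short, the gap you flag is the entire content of the theorem, and your outline does not close it; the paper closes it by the concrete finite enumeration above rather than by any general convex-geometric principle.
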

\textbf{Proof (sketch).}\ Suppose for contradiction that there exists a path $\gamma$ in $\R^3$ whose shadows are convex cycles. For every $i\in\{1,2,3\}$, $\gamma$ lies on the surface $\Gamma_i$ of a cylinder with section $\pi_i(\gamma)$ and $x_i$-parallel axis.

The intersection of $\Gamma_1$ and $\Gamma_2$ is sketched in Figure~\ref{figconv}. It consists of two horizontal axis-aligned rectangles $R_1$ and $R_2$ (assuming that the vertical direction is $x_3$-parallel) whose vertices are joined by four paths $\sigma_1$, $\sigma_2$, $\sigma_3$, and $\sigma_4$. The rectangles $R_1$ and $R_2$ may be degenerate, i.e., they may be $x_1$-parallel or $x_2$-parallel segments, or points. Let a horizontal plane intersect the interior of the path $\sigma_i$ in the point $s_i$, for each $i\in\{1,2,3,4\}$. Then, for all $i\in\{1,2,3,4\}$, either $s_i\in\gamma$ or $s_{i+1}\in\gamma$, where indices are taken modulo $4$.

\begin{figure}[h]
\centering
\includegraphics[scale=1]{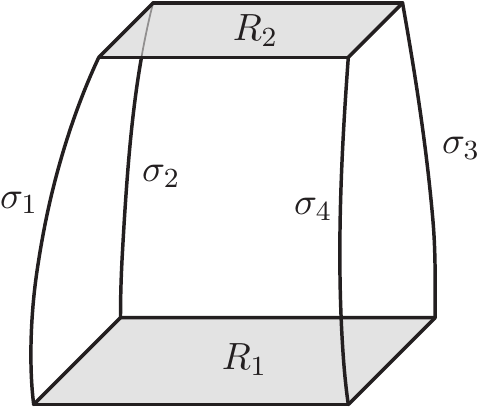}
\caption{Intersection of $\Gamma_1$ and $\Gamma_2$.}
\label{figconv}
\end{figure}

Further intersecting $\Gamma_1\cap\Gamma_2$ with $\Gamma_3$, we reduce $R_1$ and $R_2$ to at most four horizontal curves each. Therefore, in total we have $n\leq 12$ curves, whose union is an embedding in $\R^3$ of a graph $G$ with $n$ edges. Also, we may assume without loss of generality that each endpoint of $\gamma$ lies at a vertex of the embedding of $G$, or at the midpoint of one of the $n$ edges. Hence there are only finitely many possible graphs $G$ to consider, and only finitely many choices of $\gamma$ in each graph embedding. By exhaustively examining all the possible choices of $\gamma$, we conclude that none of them has shadows that are all convex cycles.\hfill$\square$

\section{Shadows in higher dimensions}\label{s:higher}
In this section we generalize Rickard's curve to higher dimensions. We inductively construct an embedding of a $d$-sphere in $\R^{d+2}$ whose $d+2$ shadows are all contractible, i.e., they deformation-retract to a point.

An \emph{$x_i$-slice} of a set $A\subseteq \R^n$, with $1\leq i\leq n$, is a non-empty intersection between $A$ and an $x_i$-orthogonal hyperplane.

\begin{theorem}
For every $d\geq 1$, there exists an embedding of a $d$-sphere in $\R^{d+2}$ whose shadows are all contractible.
\end{theorem}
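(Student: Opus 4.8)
The plan is to prove this by induction on $d$, with the base case $d=1$ supplied by Rickard's curve: a simple closed curve (an embedded $1$-sphere) in $\R^3$ whose three shadows are trees, and trees are contractible. For the inductive step, given an embedded $d$-sphere $S\subseteq\R^{d+2}$ all of whose $d+2$ shadows are contractible, I would build a $(d+1)$-sphere in $\R^{d+3}=\R^{d+2}\times\R$ as a Euclidean suspension of $S$: identify $\R^{d+2}$ with the hyperplane $x_{d+3}=0$, fix any point $p\in\R^{d+2}$ and a height $h>0$, set $N=(p,h)$ and $P=(p,-h)$, and let
\[
  S' \;=\; \bigcup_{q\in S}\bigl([q,N]\cup[q,P]\bigr),
\]
the union of all segments joining points of $S$ to the two apexes $N$ and $P$.

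First I would check that $S'$ is an embedded $(d+1)$-sphere. The upper part $\bigcup_{q\in S}[q,N]$ meets the hyperplane $x_{d+3}=t$, for $0\le t<h$, in the scaled and translated copy $(1-t/h)S+(t/h)p$ of $S$, which is an embedded $d$-sphere, and it meets $x_{d+3}=h$ in the single point $N$; hence the upper part is a continuous bijective — therefore homeomorphic — image of the cone $CS^d\cong D^{d+1}$, and likewise for the lower part. The two parts meet exactly in $S\times\{0\}$, since positive and negative values of $x_{d+3}$ separate them, so $S'\cong D^{d+1}\cup_{S^d}D^{d+1}\cong S^{d+1}$, and it is embedded in $\R^{d+3}$ because it is a continuous bijective image of the compact space $S^{d+1}$ inside a Hausdorff space.

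Next I would compute the $d+3$ shadows. Projecting out the new coordinate $x_{d+3}$ collapses both apexes onto $p$ and each segment $[q,N]$ or $[q,P]$ onto $[q,p]$, so $\pi_{d+3}(S')=\bigcup_{q\in S}[q,p]$ is the Euclidean cone over $S$ with apex $p$; this set is star-shaped with respect to $p$, hence contractible. Projecting out an old coordinate $x_i$ with $i\le d+2$ sends $S$ to its shadow $T_i=\pi_i(S)$ and sends $N,P$ to $(\overline p,\pm h)$, where $\overline p$ is the corresponding projection of $p$; by the same slicing argument as above, $\pi_i(S')$ is the union of two honest cones over $T_i$ with apexes at heights $+h$ and $-h$, glued along $T_i\times\{0\}$, i.e. $\pi_i(S')$ is homeomorphic to the suspension $\Sigma T_i$. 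Since $T_i$ is contractible by the inductive hypothesis and suspension preserves homotopy type, $\Sigma T_i\simeq\Sigma(\mathrm{pt})$ is contractible. Thus all $d+3$ shadows of $S'$ are contractible, completing the induction.

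I expect the main obstacle to be the bookkeeping in the inductive step rather than any deep difficulty: one must verify carefully that the two Euclidean cones making up $S'$, and likewise the two making up each $\pi_i(S')$, are genuine cones — homeomorphic to the abstract cone on $S^d$, respectively on $T_i$ — and that they overlap only along the intended ``equator'', so that no unwanted identifications spoil either the embedding or the identification of the shadow with a suspension. The observation that positive versus negative values of $x_{d+3}$ keep the two cones disjoint away from $x_{d+3}=0$, combined with the standard principle that a continuous bijection from a compact space to a Hausdorff space is a homeomorphism, is what makes everything go through. (If one wants the stronger conclusion that each shadow deformation-retracts onto a point, it suffices to strengthen the inductive hypothesis accordingly: a tree retracts to a point, a Euclidean cone retracts to its apex along straight lines, and a suspension of a space that retracts to a point retracts first onto the suspension of that point, which is an arc, and then onto a point.)
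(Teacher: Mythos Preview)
Your proposal is correct and uses the same construction as the paper: both take the Euclidean suspension of the previously built $d$-sphere (the paper fixes $p=0$, $h=1$, writing $S_{d+1}=\bigcup_{\lambda\in[-1,1]}(1-|\lambda|)S_d\times\{\lambda\}$). The only real difference is in how contractibility of the shadows is argued: the paper writes down explicit deformation retractions slice-by-slice using the inductive retraction $F_{d,i}$, while you identify $\pi_i(S')$ with the abstract suspension $\Sigma T_i$ and invoke the homotopy invariance of suspension; both arguments are equally valid and equally short.
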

\begin{proof}
Let $S_1$ be Rickard's curve, introduced in Section~\ref{s:intro}. Then, for all $d\geq 1$, we inductively define
$$S_{d+1}=\bigcup_{\lambda\in [-1,1]}(1-|\lambda|)\cdot S_d\times \{\lambda\}.$$
It is easy to see that $S_d$ is an embedding of a $d$-sphere in $\R^{d+2}$ for every $d\geq 1$. We claim that all the shadows of $S_d$ deformation-retract to the point $\{0\}^{d+1}$. This is true for $d=1$, as suggested by Figure~\ref{fig2}. Assume now the inductive hypothesis that the claim is true for $S_d$, and therefore there exists a continuous map
$$F_{d,i}\colon \pi_i(S_d)\times [0,1]\to \pi_i(S_d)$$
with $F_{d,i}(x,0)=x$ and $F_{d,i}(x,1)=\{0\}^{d+1}$, for every $1\leq i\leq d+2$. Now, for each $1\leq i\leq d+3$, we can construct a continuous map
$$F_{d+1,i}\colon \pi_i(S_{d+1})\times [0,1]\to \pi_i(S_{d+1})$$
with $F_{d+1,i}(x,0)=x$ and $F_{d+1,i}(x,1)=\{0\}^{d+2}$.

\begin{figure}[h]
\centering
\includegraphics[scale=1]{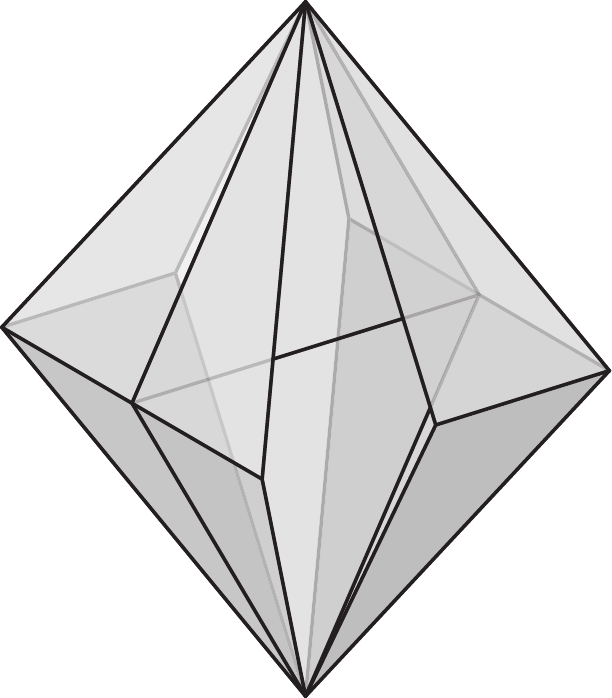}
\caption{$x_i$-shadow of $S_2$, for $1\leq i\leq 3$.}
\label{figh1}
\end{figure}

If $1\leq i\leq d+2$, we first define the auxiliary map
$$F'\colon \pi_i(S_{d+1})\times [0,1]\to \pi_i(S_{d+1})$$
as follows. For every $x\in \pi_i(S_{d+1})$ such that $|x_{d+2}|\neq 1$ and $\lambda\in [0,1]$, we let
$$F'(x,\lambda)=\left(1-|x_{d+2}|\right)\cdot F_{d,i}\left(\frac{\pi_{d+2}(x)}{1-|x_{d+2}|},\lambda\right)\\
\times\left\{x_{d+2}\right\}.$$
If $x\in \pi_i(S_{d+1})$ with $|x_{d+2}|=1$ and $\lambda\in [0,1]$, we let $F'(x,\lambda)=x.$ Observe that every $x_{d+2}$-slice of $\pi_i(S_{d+1})$ is a scaled copy of $\pi_i(S_d)$. (Figure~\ref{figh1} shows $\pi_i(S_{d+1})$ for $d=1$.) Informally, $F'$ applies $F_{d,i}$ with parameter $\lambda$ to a suitably scaled copy of each $x_{d+2}$-slice, and then it rescales it back. Therefore, since $F_{d,i}$ is a deformation retraction of $\pi_i(S_d)$ to the point $\{0\}^{d+1}$, $F'$ is a deformation retraction of $\pi_i(S_{d+1})$ to the segment $\{0\}^{d+1}\times [-1,1]$. To obtain $F_{d+1,i}$, one just has to compose $F'$ with a deformation retraction of $\{0\}^{d+1}\times [-1,1]$ to the point $\{0\}^{d+2}$. In formulas, for $x\in \pi_i(S_{d+1})$ and $\lambda\in [0,1]$,
$$F_{d+1,i}(x,\lambda) = \left\{
  \begin{array}{ll}
    F'(x,2\lambda) & \mbox{if\ } \lambda<1/2\\
    (2-2\lambda)\cdot F'(x,1) & \mbox{if\ } \lambda\geq 1/2.
  \end{array}
\right.$$

\begin{figure}[h]
\centering
\includegraphics[scale=1]{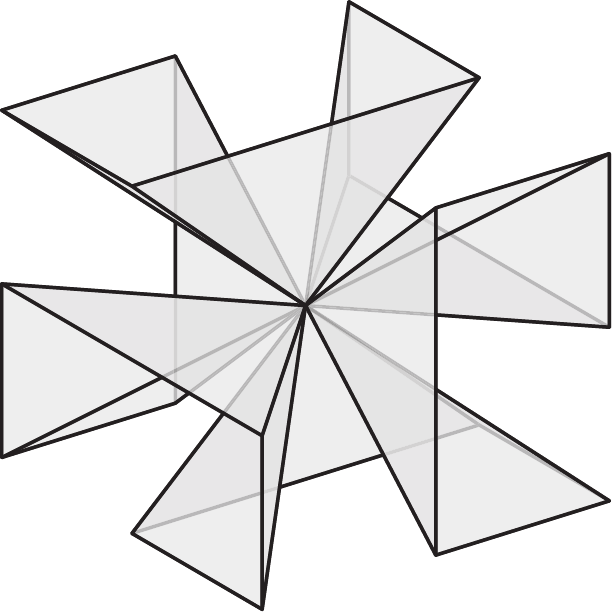}
\caption{$x_4$-shadow of $S_2$.}
\label{figh2}
\end{figure}

If $i=d+3$, we can simply set
$$F_{d+1,i}(x,\lambda)=(1-\lambda)\cdot x$$
for every $x\in \pi_i(S_{d+1})$ and $\lambda\in [-1,1]$. This is easily seen to be a deformation retraction to $\{0\}^{d+2}$. (Figure~\ref{figh2} shows $\pi_i(S_{d+1})$ for $d=1$.)

Hence all the shadows of the $d$-sphere $S_d$ deformation-retract to a point for every $d\geq 1$, meaning that they are contractible.
\end{proof}

\section{Concluding remarks}\label{s:conclusions}
In this paper we studied the shadows of curves in $\R^3$ (a shadow being an axis-parallel projection), also settling some long-standing open problems posed in~\cite{cccg2007,workshop}.

In Section~\ref{s:paths} we proved that there is no cycle in $\R^3$ whose shadows are all paths. Note that by applying a projective transformation, we may equivalently define shadows to be perspective projections, provided that the three viewpoints are not collinear, and the plane through them does not intersect the curve.

In Section~\ref{s:open} we proved that there exist paths in $\R^3$ whose shadows are all cycles. We also showed that, if such a path is a polygonal chain, it must have at least six vertices, and we found an example with exactly six vertices. Then we proved that there is no path in $\R^3$ whose shadows are all convex cycles.

Finally, in Section~\ref{s:higher} we showed that there exists an embedding of a $d$-sphere in $\R^{d+2}$ whose shadows are all contractible, for every $d\geq 1$. This generalizes Rickard's curve (see Figure~\ref{fig2}), which is a cycle in $\R^3$ whose shadows contain no cycles.

Our results can be expanded in several directions. A natural goal would be to minimize the total number of branch points of the shadows of a cycle in $\R^3$, assuming that all shadows are cycle-free. Because each shadow of Rickard's curve has two branch points, such a minimum is at most six. On the other hand, by Theorem~\ref{thrm_shadowpath}, the minimum is at least one. With the same proof technique employed in Section~\ref{s:paths}, we can prove the following generalized version of Theorem~\ref{thrm_shadowpath}, which implies that the minimum number of branch points of the shadows must be at least three.

\begin{theorem}
There is no cycle $\gamma$ in $\R^3$ with cycle-free shadows such that $\pi_1(\gamma)$ is a path, and $\pi_2(\gamma)$ and $\pi_3(\gamma)$ have at most one branch point each.\hfill$\square$
\end{theorem}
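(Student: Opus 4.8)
The plan is to mimic the proof of Theorem~\ref{thrm_shadowpath}, but now we are no longer allowed to assume the shadows are paths; instead $\pi_2(\gamma)$ and $\pi_3(\gamma)$ are cycle-free graphs (i.e.\ trees, after taking a simple curve cover) with at most one branch point each, and $\pi_1(\gamma)$ is an honest path. The first thing I would do is revisit the three lemmas of Section~\ref{s:paths} and ask which ones still work, and with what strength, when a shadow is a tree with one branch point rather than a path. A tree with at most one branch point is a union of at most three paths sharing a common endpoint (a ``spider''); the essential feature we used about paths---that the endpoints are not in the interior of any strand, and that two distinct similarly-oriented strands have disjoint interiors---needs to be re-examined on a spider.

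The key structural step I expect to need is a strengthening of Lemma~\ref{lem_2strands}: if $\gamma$ is a cycle not in an $x_1$-orthogonal plane, $\pi_1(\gamma)$ is a path, and $\pi_2(\gamma),\pi_3(\gamma)$ are trees with at most one branch point each, then $\pi_3(\gamma)$ still has at least two distinct $x_1$-strands (and symmetrically at least two distinct $x_2$-strands). The proof of Lemma~\ref{lem_2strands} used Lemma~\ref{lem_exstrand}, which in turn needed one shadow to be a path. Here I would instead exploit that $\pi_1(\gamma)$ is a path: re-coordinate the argument so that the path shadow plays the role that $\pi_j(\gamma)$ played in Lemma~\ref{lem_exstrand}, lifting a strand of the path shadow to a strand of $\gamma$. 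Then, following the original contradiction argument, I would take an $x_1$-strand $\sigma$ of $\gamma$, remove its interior, note that $\gamma\setminus\sigma^\circ$ is still connected, and argue that its $x_1$-shadow (a connected subset of the path $\pi_1(\gamma)$ containing both endpoints) is all of $\pi_1(\gamma)$, producing a second $x_1$-strand $\sigma'$ with the same $\pi_1$-shadow. If $\pi_3(\gamma)$ had a unique $x_1$-strand, then $\pi_3(\sigma)=\pi_3(\sigma')$, and picking the first point on each strand attaining an intermediate $x_1$-value $c_1$ would force $\sigma$ and $\sigma'$ to meet in their interiors, contradicting Observation~\ref{obs2}. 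The branch-point budget enters when we verify that ``unique $x_1$-strand'' is the only alternative to ``two distinct $x_1$-strands'': a spider can in principle host more exotic strand configurations, and I would need a short case analysis on the location of the single branch point relative to the $x_1$-extremes to rule these out (e.g.\ the branch point lying strictly between the extremes still forces all strands through it to share an initial segment).

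Once the strengthened two-strands lemma is in hand, I would finish exactly as in Theorem~\ref{thrm_shadowpath}: apply it twice in the roles $(x_2,x_3)\to x_1$-strands of $\pi_3(\gamma)$ and $(x_1,x_3)\to x_2$-strands of $\pi_3(\gamma)$---wait, here one must be careful, since only $\pi_1(\gamma)$ is assumed to be a path, so the clean symmetric application is to conclude that $\pi_3(\gamma)$, viewed as a planar graph in the $(x_1,x_2)$-plane, has at least two distinct $x_1$-strands and at least two distinct $x_2$-strands. Then the final contradiction comes from the planar statement that a tree in $\R^2$ with at most one branch point cannot have two distinct $x_1$-strands \emph{and} two distinct $x_2$-strands. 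This is the analogue of Lemma~\ref{lem_path}, and proving it is the second place real work is needed: I would adapt Lemma~\ref{lem_strandunion} (an $x_1$-strand and an $x_2$-strand of a tree still must intersect, by the same Jordan-curve argument, since that proof never used that $\gamma$ was a path, only that it was connected) and then redo the case analysis of Lemma~\ref{lem_path} on a spider, tracking where the branch point sits.

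The main obstacle I anticipate is exactly this planar lemma for spiders: on a path, Observation~\ref{obs2} pins two distinct strands to intersect in at most a single endpoint, which drove the whole case analysis of Lemma~\ref{lem_path}; on a spider with one branch point, two distinct $x_1$-strands can share a nontrivial path (the portion up to the branch point), so the ``single common point'' dichotomy breaks and must be replaced by ``share an initial segment terminating at the branch point.'' I would handle this by contracting the shared segment, or equivalently by arguing that after removing the branch point the spider falls into three paths and the two $x_1$-strands must live in a controlled way among these pieces, reducing to the path case plus a bounded number of extra configurations. Everything else---the connectivity of $\gamma\setminus\sigma^\circ$, the lifting lemma relative to the path shadow $\pi_1(\gamma)$, the first-point-at-height-$c_1$ trick---should carry over with only cosmetic changes, which is why the authors can state it with the same proof technique and defer the details.
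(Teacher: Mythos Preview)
Your overall plan---adapt Lemmas~\ref{lem_exstrand}, \ref{lem_2strands}, \ref{lem_strandunion}, \ref{lem_path} to trees with at most one branch point and then rerun the proof of Theorem~\ref{thrm_shadowpath}---is exactly what the paper means by ``the same proof technique,'' and your identification of the two pressure points (the two-strands lemma and the planar no-two-and-two lemma) is correct. However, your proposed repair of Lemma~\ref{lem_2strands} has a real gap.

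You try to show that $\pi_3(\gamma)$ has two distinct $x_1$-strands by replacing the role of $\pi_2$ in the original proof with $\pi_1$, on the grounds that $\pi_1(\gamma)$ is the shadow guaranteed to be a path. This cannot work as written: $\pi_1$ forgets the $x_1$-coordinate. First, there is no reason for $\pi_1(\gamma\setminus\sigma^\circ)$ to contain the endpoints of the path $\pi_1(\gamma)$; the set $\pi_1(\sigma)$ is just some connected subset of $\pi_1(\gamma)$ with no relation to the $x_1$-extremes, so the ``endpoints survive'' step from the original argument has no analogue. Second, and more fatally, even if you produce a second $x_1$-strand $\sigma'\subseteq\gamma\setminus\sigma^\circ$ (which you do get for free, since $\gamma\setminus\sigma^\circ$ is a path joining the $x_1$-extremes of $\gamma$), the first-point-at-height-$c_1$ trick needs \emph{both} $\pi_2(\sigma)=\pi_2(\sigma')$ and $\pi_3(\sigma)=\pi_3(\sigma')$ to pin down all three coordinates of $s$ and $s'$. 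The uniqueness assumption on $\pi_3(\gamma)$ gives you the latter, but you have no control over $\pi_2(\sigma')$, and nothing about $\pi_1$ can supply it, because $\pi_1(\sigma)$ carries no $x_1$-parameterization against which ``first point with $x_1=c_1$'' makes sense.

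What is actually required is to keep the original architecture of Lemma~\ref{lem_2strands} intact and instead extend Lemma~\ref{lem_exstrand} to the case where $\pi_j(\gamma)$ is a tree with at most one branch point: given an $x_i$-strand of such a spider, lift it to an $x_i$-strand of $\gamma$. This needs the short case analysis you allude to (on where the branch point sits relative to the $x_i$-extremes and which legs of the spider the strand traverses), but once it is in hand, Lemma~\ref{lem_2strands} goes through verbatim with $\pi_2(\gamma)$ a spider, and then your outline for the spider version of Lemma~\ref{lem_path} finishes the proof.
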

We conjecture Rickard's curve to be an optimal example in terms of branch points of its shadows.

\begin{conj}
If the shadows of a cycle in $\R^3$ are all cycle-free, then each shadow has at least two branch points.
\end{conj}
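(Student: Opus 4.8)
The plan is to fix attention, by the threefold coordinate symmetry, on a single shadow $T:=\pi_3(\gamma)$ and suppose for contradiction that $T$ has at most one branch point, so that $T$ is either a path or a \emph{star} (a tree with a single vertex $c$ of degree $\geq 3$ from which disjoint open ``legs'' emanate, each ending at a leaf). The first step is to upgrade Lemma~\ref{lem_2strands} from path-shadows to arbitrary cycle-free shadows. Inspecting its proof, the hypothesis that $\pi_2(\gamma)$ is a path is used only (a)~to invoke Lemma~\ref{lem_exstrand}, and (b)~through the fact that a connected subset of a path containing its endpoints is the whole path. Both survive for trees: statement (b) becomes ``a connected subtree containing all leaves is the whole tree'', and for (a) I would first establish a tree-version of Lemma~\ref{lem_exstrand} (an $x_i$-strand of a tree-shadow lifts to an $x_i$-strand of $\gamma$), proved by the same parameterization argument using that the endpoints of a strand separate the tree. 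Granting this, since $\pi_1(\gamma)$ and $\pi_2(\gamma)$ are both cycle-free, $T$ acquires at least two distinct $x_1$-strands and at least two distinct $x_2$-strands (the proof's final step $s=s'$ only uses Observation~\ref{obs2} on the simple curve $\gamma$, not any structure of $T$). If $T$ is a path this contradicts Lemma~\ref{lem_path} at once, exactly as in Theorem~\ref{thrm_shadowpath}. Hence the path case closes and the \emph{entire} difficulty is concentrated in the case that $T$ is a star.

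The warning here is that strand counting alone cannot finish the star case: unlike a path, a star can carry two $x_1$-strands and two $x_2$-strands (for instance four straight legs emanating from a corner of the bounding box), and these can even be chosen with pairwise disjoint interiors. So the generalization of Lemma~\ref{lem_path} to stars is simply false, and the extra leverage must come from the projected-away coordinate together with the fact that $\pi_1(\gamma)$ and $\pi_2(\gamma)$ are \emph{also} trees. To exploit this I would study the fiber of the branch point. Let $c=(c_1,c_2)$ be the center of the star and let $\Phi=\gamma\cap(\{c\}\times\R)$ be the points of $\gamma$ lying over $c$; in a tame or piecewise-linear setting these are finitely many points $p^1,\dots,p^m$ with distinct third coordinates $z_1<\dots<z_m$. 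Deleting $\Phi$ cuts $\gamma$ into arcs, and since moving between two distinct legs of the tree $T$ forces a passage through $c$, each arc projects under $\pi_3$ into the closure of a single leg, with every leg traversed out to its leaf and back. Thus one traversal of $\gamma$ records a cyclic word over the legs of $T$, and a non-crossing matching of the endpoints of the arcs along the line $\{c\}\times\R$.

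The crux is then to show that this data cannot be realized while keeping all three shadows cycle-free. The key observation is that in $\pi_1(\gamma)$ every point of $\Phi$ projects to a \emph{collinear} family $(c_2,z_1),\dots,(c_2,z_m)$ on the line $x_2=c_2$, and symmetrically in $\pi_2(\gamma)$ to a collinear family on $x_1=c_1$. So the arcs reconnect these ordered collinear points according to the cyclic (hence balanced, non-crossing) pattern coming from $\gamma\cong S^1$, subject to the constraint that the reconnection creates no cycle in $\pi_1(\gamma)$ and none in $\pi_2(\gamma)$. The goal is to prove that, when $T$ has only one branch point, these three requirements are jointly unsatisfiable: the single branching of $T$ forces the arcs to reconnect the collinear fiber points in a way that necessarily closes up a cycle in at least one of $\pi_1(\gamma)$ or $\pi_2(\gamma)$. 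I expect the cleanest formulation to be an Euler-characteristic or parity/winding count that assigns to each admissible non-crossing reconnection pattern the number of independent cycles it produces in the three shadows, and shows this total is positive for the star pattern.

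The hard part, and the reason this remains a conjecture, is precisely this star case: because a star can host two (even disjoint-interior) strands in both axis directions, no refinement of the strand lemmas can suffice, and the argument must genuinely couple all three tree conditions through the branch-point fiber. Controlling the combinatorics of the admissible non-crossing reconnection patterns and excluding every one of them is where I expect the real work to lie. A secondary, expectedly routine but necessary, obstacle is reducing wild curves to the tame or piecewise-linear setting and handling degenerate extreme sets and fibers that are segments rather than points, so that the finiteness of $\Phi$ and of the pattern set needed for the counting argument is rigorously justified.
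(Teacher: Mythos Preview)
The statement you are addressing is a \emph{conjecture} in the paper, not a theorem: the paper offers no proof, only the remark that Rickard's curve (six branch points total) is conjectured to be optimal. So there is no paper proof to compare your proposal against.

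Your proposal is likewise not a proof, and you say so yourself: the star case is explicitly left open. What you have written is a research plan, and as such it is sensible. A few remarks on how it lines up with what the paper does establish. The paper states (without proof, just ``with the same proof technique'') the intermediate theorem that there is no cycle with cycle-free shadows where one shadow is a path and the other two have at most one branch point each. Your ``path case'' is ostensibly stronger: you want $\pi_3(\gamma)$ to be a path while $\pi_1(\gamma)$ and $\pi_2(\gamma)$ are \emph{arbitrary} trees. Getting this requires your tree-version of Lemma~\ref{lem_exstrand}, and that generalization is not as automatic as you suggest. The original proof parameterizes a path from one endpoint of $\pi_j(\gamma)$ to the other and uses a last/first argument; for a tree one must choose an appropriate subpath of $\gamma$ whose projection covers the given strand, and the ``$d$ separates $c$ from $b$'' step relies on the linear order of a path. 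You should expect to work a bit here.

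Your diagnosis of the obstruction in the star case is accurate: Lemma~\ref{lem_path} genuinely fails for stars, so strand-counting alone cannot close the argument, and some coupling of all three tree conditions through the fiber over the branch point is needed. The non-crossing matching picture over the fiber is a reasonable framework, but the proposed ``Euler-characteristic or parity/winding count'' is a hope rather than a mechanism; you have not identified an invariant or an inequality that actually rules out all admissible patterns. Until such an invariant is produced, this remains a conjecture, consistent with the paper's own assessment.
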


\small{\paragraph{Acknowledgments.}
The authors are indebted to Giuseppe Antonio Di Luna, Pat Morin, and Joseph O'Rourke for stimulating discussions.

Prosenjit Bose and Jean-Lou De Carufel were supported in part by NSERC.

Michael G.~Dobbins was supported by NRF grant 2011-0030044 (SRC-GAIA)
funded by the government of South Korea.

Heuna Kim was supported by the Deutsche Forschungsgemeinschaft within the research training group ``Methods for Discrete Structures'' (GRK 1408).}

\small
\bibliographystyle{abbrv}

\begin{thebibliography}{9}

\bibitem{cccg2007}
Erik D.~Demaine and Joseph O'Rourke. 
\newblock Open Problems from CCCG 2007. 
\newblock In {\em Proceedings of CCCG 2008}, pages 183--186, 2008. 
\url{http://cccg.ca/proceedings/2008/paper45.pdf}

\bibitem{workshop}
Mathematics Research Communities, workshop held at the Snowbird Resort, Utah, USA, 2012.
\url{http://www.ams.org/programs/research-communities/mrc-12}

\bibitem{blog}
Adam P.~Goucher.
\newblock Treefoil.
\newblock {\em Complex Projective 4-Space}, December 12, 2012.
\url{http://cp4space.wordpress.com/2012/12/12/treefoil/}

\bibitem{winkler}
Peter Winkler.
\newblock {\em Mathematical mind-benders.}
\newblock A K Peters, Ltd., 2007.

\end{thebibliography}

\end{document}